\renewcommand{\paragraph}{\subsubsection*}
\def\figwidth{\columnwidth}
\newcounter{oldthmcount}
\newcounter{oldclmcount}
\newcounter{newthmcount}
\newenvironment{claimproof}[1][\proofname]
  {\begin{proof}[\proofname]}
  {\end{proof}}
\providecommand{\urlstyle}[1]{}
\providecommand{\doi}[1]{\href{http://dx.doi.org/#1}{\nolinkurl{doi:#1}}}
\xpatchcmd \thmt@restatable{\begingroup}
                           {\begingroup\def\thmt@tmp@restatename{#3}}{}{\fail}
\newtheoremstyle{claim}%
{}{}%
{\normalfont}{}%
{\itshape}{.}{ }%
{\ifcsname thmt@tmp@restatename\endcsname 
   \ifthmt@thisistheone%
      {\thmname{#1} \hyperlink{proof\thmt@tmp@restatename}{\thmnumber{#2}}\thmnote{ (#3)}}%
    \else%
      {\thmname{#1} \hypertarget{proof\thmt@tmp@restatename}{\thmnumber{#2}}\thmnote{ (#3)}}%
   \fi%
 \else%
   \thmname{#1} \thmnumber{#2}\thmnote{ (#3)}%
 \fi}
\newtheoremstyle{announce}%
{}{}%
{\itshape}{}%
{}{{\bfseries.}}{ }%
{\ifcsname thmt@tmp@restatename\endcsname 
   \ifthmt@thisistheone%
      {{\bfseries\thmname{#1} \hypertarget{proof\thmt@tmp@restatename}{\thmnumber{#2}}}\thmnote{ (#3)}}%
    \else%
      {{\bfseries\thmname{#1} \hyperlink{proof\thmt@tmp@restatename}{\thmnumber{#2}}}\thmnote{ (#3)}}%
   \fi%
 \else%
   {\bfseries\thmname{#1} \thmnumber{#2}}\thmnote{ (#3)}%
 \fi}
\newtheoremstyle{restatable}%
{}{}%
{\itshape}{}%
{}{{\bfseries.}}{ }%
{\ifcsname thmt@tmp@restatename\endcsname 
   \ifthmt@thisistheone%
      {{\bfseries\thmname{#1} \hyperlink{proof\thmt@tmp@restatename}{\thmnumber{#2}}}\thmnote{ (#3)}}%
    \else%
      {{\bfseries\thmname{#1} \hypertarget{proof\thmt@tmp@restatename}{\thmnumber{#2}}}\thmnote{ (#3)}}%
   \fi%
 \else%
   {\bfseries\thmname{#1} \thmnumber{#2}}\thmnote{ (#3)}%
 \fi}
\theoremstyle{announce}
\newtheorem{theorem}{Theorem}
\newtheorem{lemma}[theorem]{Lemma}
\newtheorem{proposition}[theorem]{Proposition}
\newtheorem{corollary}[theorem]{Corollary}
\theoremstyle{restatable}
\theoremstyle{remark}
\theoremstyle{definition}
\newtheorem{example}[theorem]{Example}
\theoremstyle{claim}
\newtheorem{claim}{Claim}[theorem]
\newtheorem{fact}{Fact}[theorem]
\crefname{section}{Sec.}{Secs.}
\Crefname{section}{Section}{Sections}
\crefname{equation}{Equ.}{Equs.}
\Crefname{equation}{Equation}{Equations}
\crefname{appendix}{App.}{Apps.}
\Crefname{appendix}{Appendix}{Appendices}
\crefname{subsection}{Sec.}{Secs.}
\Crefname{subsection}{Section}{Sections}
\crefname{subsubsection}{\S\!}{\S\!}
\crefname{theorem}{Thm.}{Thms.}
\Crefname{theorem}{Theorem}{Theorems}
\crefname{lemma}{Lem.}{Lems.}
\Crefname{lemma}{Lemma}{Lemmata}
\crefname{relemma}{Lem.}{Lems.}
\Crefname{relemma}{Lemma}{Lemmata}
\crefname{fact}{Fact}{facts}
\Crefname{fact}{Fact}{Facts}
\crefname{corollary}{Cor.}{Cors.}
\Crefname{corollary}{Corollary}{Corollaries}
\crefname{proposition}{Prop.}{Props.}
\crefname{claim}{Claim}{claims}
\Crefname{claim}{Claim}{Claims}
\crefname{definition}{Def.}{Defs.}
\Crefname{definition}{Definition}{Definitions}
\crefname{example}{Ex.}{Exs.}
\Crefname{example}{Example}{Examples}
\crefname{remark}{Rmk.}{Rmks.}
\Crefname{remark}{Remark}{Remarks}
\crefname{figure}{Fig.}{Figs.}
\Crefname{figure}{Figure}{Figures}
\crefname{table}{Tab.}{Tabs.}
\Crefname{table}{Table}{Tables}
\crefname{property}{Pty.}{Pties.}
\Crefname{property}{Property}{Properties}
\newcommand{\NN}{\mathbb{N}}
\newcommand{\ZZ}{\mathbb{Z}}
\newcommand{\QQ}{\mathbb{Q}}
\newcommand{\QQbar}{\overline{\mathbb{Q}}}
\newcommand{\Gen}[1]{\left\langle#1\right\rangle}
\newcommand{\GId}{\mathds{1}}
\newcommand{\KK}{K}
\newcommand{\Zcl}[2][]{\overline{#2}^{#1}}
\DeclareMathOperator{\diag}{diag}
\DeclareMathOperator{\GL}{GL}
\DeclareMathOperator{\SL}{SL}
\DeclareMathOperator{\Gal}{Gal}
\DeclareMathOperator{\Tr}{Tr}
\DeclareMathOperator{\Aut}{Aut}
\newcommand{\IdComp}[1]{{#1}^\circ}
\newcommand{\done}[1]{#1}
\newcommand{\MB}[1]{#1}
\newcommand{\MahsaShreier}[1]{\unskip}
\newcommand{\appref}[1]{\cref{#1}}
\newtheorem*{theorem*}{Theorem}
\newtheorem*{lemma*}{Lemma}
\newtheorem*{corollary*}{Corollary}
\newcommand\footnoteref[1]{\protected@xdef\@thefnmark{\ref{#1}}\@footnotemark}
\begin{document}

\title[The Zariski Closure of Finitely Generated Matrix Groups]{On the Computation of the Zariski Closure of \\ Finitely Generated Groups of Matrices}
\author[K.~Nosan et al.]{Klara Nosan,$^1$ Amaury Pouly,$^{1,2}$ \mbox{Sylvain Schmitz,$^{1,3}$} \mbox{Mahsa
  Shirmohammadi,$^1$} \and \mbox{James Worrell$^2$}}
\address{\begin{minipage}{\textwidth}
$^1$~Universit\'e Paris Cit\'e, CNRS, IRIF, Paris, France\\
$^2$~Department of Computer Science, University of Oxford, UK\\
$^3$~Institut Universitaire de France, France\end{minipage}}
\date{}

\begin{abstract}
  \noindent
  We investigate the complexity of computing the Zariski closure of a
finitely generated group of matrices.  The Zariski closure was
previously shown to be computable by Derksen, Jeandel, and Koiran, but
the termination argument for their algorithm appears not to yield any
complexity bound.  In this paper we follow a different approach and
obtain a bound on the degree of the polynomials that define the
closure. Our bound shows that the closure can be computed in
elementary time. 
We also obtain upper bounds on the length of chains
of linear algebraic groups, where all the groups
are generated over a fixed number field.

\end{abstract}
\maketitle

\DeclareRobustCommand{\gobblefive}[5]{}
\DeclareRobustCommand{\gobblenine}[9]{}
\newcommand*{\SkipTocEntry}{\addtocontents{toc}{\gobblenine}}





\section{Introduction}

Finitely generated groups of matrices are fundamental mathematical
objects that appear in a wide variety of areas in computer science,
including algebraic complexity theory, quantum computation, dynamical
systems, graph theory, control theory, and program verification.
Unfortunately, matrix groups are challenging from the algorithmic
point of view, with many natural problems being undecidable.  For
instance, already in the case of finitely generated subgroups of the
group $\SL_4(\mathbb{Z})$ of~$4\times 4$ integer matrices with
determinant one, the membership problem and the conjugacy problem are
undecidable~\citep{Mik66,miller72}. 

In order to get a better handle on a finitely generated matrix group, it turns out to be worth over-approximating it by its Zariski closure. This closure is then a linear algebraic group and comes with a finite representation as an \emph{algebraic variety}, i.e., as the locus of zeroes of a finite collection of polynomials.
While this approximation is too coarse for solving
the membership and conjugacy problems, it can
nevertheless be extremely useful.  Indeed, we gain access to
a rich algorithmic toolbox, including Gr\"obner
bases~\citep{BeckerW93}, Lie algebras~\citep{deGraafBook}, and
decision procedures for the first-order theories of algebraically
closed fields and real closed fields~\citep{renegar92}, that allows to
manipulate the Zariski closure and solve other computational problems
on this new representation.

A first case in point comes from the field of quantum computation.  A
``measure once'' quantum finite automaton is essentially specified by
a finite collection of unitary matrices of the same dimension, and a
word is accepted if its value is strictly greater than a given
threshold.  As shown by \citet*{BJKP05}, the corresponding emptiness
problem is decidable.  This is somewhat miraculous considering that
they also show that the non-strict version of the emptiness problem is
undecidable, and considering that all versions of the language emptiness problem
are undecidable for probabilistic automata, i.e., finite automata in
which for each input letter the transition matrix is
stochastic~\cite{Paz,Blondel01}.  The miracle is underpinned by two
nontrivial facts.  First, the Euclidean closure of a group of unitary
matrices is algebraic and therefore the Euclidean closure equals the
Zariski closure.  Second, there is an algorithm due
to \citet*{DerksenJK05} to compute the Zariski closure of a finitely
generated group of matrices.  As noted in~\cite{BJKP05},
quantum automata are not the sole application of the latter result:
the question of whether a finite set of quantum gates is
universal can be reduced to computing the Zariski closure of the
associated group of unitary matrices.  Decidability then follows by
noting again that the (Euclidean) density of the group generated by
those matrices can be tested on its Zariski closure using standard
manipulations of algebraic varieties.

A second case in point comes from the field of program verification,
where program invariants are a fundamental technique that allow one to
compute an overapproximation of the set of reachable states (which is
itself not-computable in general) and thereby verify whether a program
meets a certain specification, such as loop termination.  The crux of
these techniques lies in being able to compute precise enough
over-approximations \citep{BradleyM07}.  A landmark \citeyear{Karr76}
result by \citet*{Karr76} shows that the strongest affine invariant (or
smallest linear variety, in the language of algebraic geometry) is
computable in the case of so-called affine programs; those are
programs having only non-deterministic (as opposed to conditional)
branching and all of whose assignments are given by affine
expressions
.  The problem of computing polynomial invariants rather than the
coarser affine ones has been studied extensively and was finally
answered positively in \citep*{HrushovskiOP018}, wherein some of the
authors extended the result of \cite{DerksenJK05} to compute the
Zariski closure of finitely generated semigroups.

\subsection{Contributions}  The key result in the applications we
mentioned above is the algorithm computing the Zariski closure of a
finitely generated group of matrices due to \citet*{DerksenJK05}.
Unfortunately, it is not clear how to analyse the complexity of this
algorithm, or even whether it has an elementary complexity
\done{(see the discussion in \cref{sub-DJK} and \appref{app-DJK})}.

\bigskip
In this paper, we show that the Zariski closure of a finitely
generated group of matrices is computable in elementary time.  More
precisely, we obtain a
tenfold\footnote{\label{note1}
The formulation here corrects the published version \cite{zariskipublished} by correcting the cited result {\citealt[Prop.~B.6]{Feng}} restated in \Cref{th:homomorphism_quotient}.} 

exponential bound in the dimension and
the number of the generators and the size of their entries (their
height).  We also obtain a sevenfold\footnoteref{note1}
exponential bound in the
case where the group consists of orthogonal (or more generally
unitary) transformations, as is always the case in quantum
computing.

\paragraph{Degree Bound}
Our main technical result is a quantitative structure lemma for
algebraic groups, which may be of independent interest.  It allows us
to prove the following upper bound on the degree of the polynomials
defining the Zariski closure of a finitely generated matrix group.
Here, the \emph{height} of a rational number $\frac{a}{b}$, with $a,b$
coprime integers, is $\max(|a|,|b|)$%
\MB{, and its bitsize is
  the logarithm of its height.  We denote by $\mathrm{exp}^k(x)$}  
  iterated
exponentiation in base~2, defined by $\mathrm{exp}^0(x):=x$ and
$\mathrm{exp}^{k+1}(x):=2^{\mathrm{exp}^k(x)}$.

\begin{restatable*}[Degree bound]{theorem}{thbound}\label{th:bound_finitely_generated_closure}
    Let $n\in\NN$ and let $S\subseteq\GL_n(\QQ)$ be a finite set of
    matrices whose entries have height at most
    $h \in \mathbb{N}$. Then the Zariski closure of the group
    generated by~$S$ can be represented by finitely many polynomials
    of degree at most $(\log h)^{2^{|S|^{\exp^6(\mathrm{poly}(n))}}}$
                      with coefficients in~$\QQ$, forming a basis of the vanishing
    ideal of the group generated by~$S$.
    Furthermore, if~$G$ contains only semisimple elements
    then the degree can be bounded 
    by $(\log h)^{2^{|S|^{\exp^3(\mathrm{poly}(n))}}}$.
\end{restatable*}
\noindent
\Cref{ex:dim,ex:height} in \cref{sec:discussion} show that the dependencies of
the bound on the height~$h$ and the dimension~$n$ are unavoidable.

Our approach to proving \cref{th:bound_finitely_generated_closure}
actually has more in common with an algorithm by \citeauthor*{HR} for
computing the Galois group of a linear differential
equation~\cite{HR}, than with the original algorithm
from~\citep{DerksenJK05}.  
Although \citeauthor*{HR}'s algorithm does not directly imply an
algorithm to compute the Zariski closure of a finitely generated
matrix group, our proof relies on a key insight about the structure of
linear algebraic groups from~\cite{HR}, namely the existence of a
finite-index subgroup of computable degree that lies inside a given
linear algebraic group; see \cref{sec:overview} for an overview,
and \crefrange{sec:semisimple}{sec:main} for the main proof arguments.

\paragraph{Complexity of Computing the Zariski Closure}
Now endowed with an \emph{a priori} bound~$d$ on the
degree of the polynomials that define a Zariski closure of a finitely
generated matrix group, the closure itself can be computed \done{using the
following algorithm.}
\begin{enumerate}
\item There is a trivial reduction from computing the Zariski closure
  of a matrix group generated by a finite set~$S$ to computing the
  strongest polynomial invariant of an affine
  program~\cite{Muller-OlmS04b}. Indeed, the group generated by~$S$
  is the set of reachable configurations of an affine program with a
  single location and a single loop containing a nondeterministic
  choice between applying each generator in~$S$; the strongest
  polynomial invariant of the program is then exactly the Zariski
  closure of the group generated by~$S$.
\item \Citet*{Muller-OlmS04b} show how to reduce the
  problem of computing the strongest polynomial invariant of degree at
  most~$d$ of an affine program with $m:=n^2+1$ variables to that of computing
  the strongest affine invariant in higher dimension
  , and then apply a simplification of \citeauthor*{Karr76}'s
  algorithm~\cite{Karr76}, with an overall complexity \done{of}~$O(|S|\cdot
  m^{3d})$ \done{in the uniform cost model (i.e., every machine operation
  has constant cost)}.
\end{enumerate}
\noindent
\done{The above algorithm together with the degree bound of \cref{th:bound_finitely_generated_closure} yields our main result.}

\begin{theorem}[Main result]\label{the:main}
  Let $n\in\NN$ and let $S\subseteq\GL_n(\QQ)$ be a finite set of
  matrices whose entries have height at most $h \in \mathbb{N}$.  Then
  there is an algorithm that computes the Zariski closure of the group
  generated by $S$ in time $2^{(\log h)^{2^{|S|^{\exp^6(\mathrm{poly}(n))}}}}$%
  . The closure is
  represented by a finite basis of the vanishing ideal of the group generated by~$S$.
\end{theorem}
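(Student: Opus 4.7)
The plan is to derive the theorem as an immediate consequence of the degree bound in \cref{th:bound_finitely_generated_closure}, combined with the two-step reduction already outlined in the paragraphs preceding the statement. First I would invoke \cref{th:bound_finitely_generated_closure} to fix
\[
d \;:=\; (\log h)^{2^{|S|^{\exp^4(\mathrm{poly}(n))}}}
\]
as an a priori upper bound on the degree of a finite system of polynomials forming a basis of the vanishing ideal of the group $G$ generated by $S$. This reduces the computational task to producing the ideal of \emph{all} polynomials of degree at most $d$ that vanish on $G$, rather than some a priori unbounded object.

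Next I would encode $G$ as the reachable set of an affine program with $n^2+1$ variables, one control location, and a nondeterministic self-loop offering a branch for each of the $|S|$ generators (each branch being an affine update given by multiplication by that generator). By the observation recalled in step~(1) of the introduction, the Zariski closure of $G$ coincides with the strongest polynomial invariant of this program. Applying the reduction of \cite{Muller-OlmS04b} then translates the task of computing the strongest polynomial invariant of degree at most $d$ into the task of computing the strongest \emph{affine} invariant of a lifted program over the basis of monomials of degree at most~$d$. Running their simplification of \citeauthor{Karr76}'s algorithm on the lifted program yields the desired basis of the vanishing ideal, in $O(|S|\cdot m^{3d})$ machine operations in the uniform cost model, with $m=n^2+1$.

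The only remaining task, and the one that needs the most care, is to verify that this running time is subsumed by the stated bound $2^d$. Since $\log m = O(\log n)$ and $d$ already contains an iterated exponential in $n$ inside its tower, one has
\[
|S|\cdot m^{3d} \;=\; |S|\cdot 2^{3d\log m} \;\le\; 2^{d\cdot\mathrm{poly}(n)},
\]
and the extra $\mathrm{poly}(n)$ factor in the exponent can be absorbed by slightly enlarging the $\mathrm{poly}(n)$ appearing inside $\exp^4(\mathrm{poly}(n))$ in the definition of $d$. The main obstacle is thus the book-keeping step of checking that converting from the uniform cost model to an actual bit-complexity bound only incurs a further polynomial blow-up in $d$ and $n$; this follows because every intermediate rational produced by the lifted Karr procedure has bitsize polynomial in the input bitsize, the dimension, and~$d$, none of which escapes the tower.
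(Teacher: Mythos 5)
Your proposal follows exactly the paper's own derivation: Theorem~\ref{the:main} is obtained by combining the degree bound of \cref{th:bound_finitely_generated_closure} with the two-step reduction to \citeauthor*{Karr76}'s algorithm via \citet{Muller-OlmS04b}, as laid out in the enumerated paragraphs preceding the theorem statement. Your additional care regarding the uniform-cost-model versus bit-complexity distinction and the absorption of the $\mathrm{poly}(n)$ factor into the tower is a reasonable elaboration of bookkeeping the paper leaves implicit, but the argument is the same.
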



\paragraph{Chains of Linear Algebraic Groups}
Finally, we also obtain an upper bound on the length of chains of
linear algebraic groups generated over a fixed number field;
see \cref{sec:chain}.  This bound gives a quantitative version of the
Noetherian property of subgroups of linear algebraic groups.  In the
case when all groups are generated by unitary matrices, the
bound is triply\footnoteref{note1}
 exponential in the dimension of the matrices
and degree of the number field over~$\QQ$.

\subsection{Overview}\label{sec:overview}
Let us give a high-level account of the proof of the degree bound
in~\cref{th:bound_finitely_generated_closure}.  We also introduce some
of the main definitions and notations along the way; we refer the
reader to \cite{HumphreysLAG} for more details.  Throughout this
section, we state the general definitions for a perfect field~$k$ with
algebraic closure~$K$, but our results will be framed in the case of
$k=\QQ$ the field of rational numbers and $\KK=\QQbar$ the field of
algebraic numbers.

\subsubsection{Algebraic Geometry}\label{sub:ag}
Following a standard practice (see \cref{sub:lag}), we are going to
view our finitely generated matrix group as a subset $G\subseteq\KK^m$
of the affine space over~$\KK$ for some dimension~$m$.  Our
overarching goal is to over-approximate~$G$ by
\begin{equation}\label{eq:variety}
  \mathbf V(I) := \{ (a_1,\ldots,a_m)\in \KK^m : \forall f \in I,
  f(a_1,\ldots,a_m)=0\}
\end{equation}
the set of common zeroes of a set~$I\subseteq\KK[x_1,\ldots,x_m]$ of
polynomials from the ring $\KK[x_1,\ldots,x_m]$ of polynomials in
variables $x_1,\ldots,x_m$ and coefficients in $\KK$
.  Indeed, an (affine) \emph{algebraic set} or \emph{algebraic
  variety}~$X \subseteq \KK^m$ is a set of the form $X=\mathbf V(I)$
for some $I\subseteq \KK[x_1,\ldots,x_m]$; see
\cite[Chap.~1]{HumphreysLAG}. In particular, $X$ is
\begin{itemize}
\item\emph{defined over~$k$} if there exists a set $I\subseteq
  k[x_1,\dots,x_m]$ such that $X=\mathbf V(I)$,
\item\emph{$d$-bounded} if there exists
  a set~$I\subseteq\KK[x_1,\ldots,x_m]$ of polynomials all of degree at
  most~$d\in \mathbb{N}$ such that $X=\mathbf V(I)$,
\item\emph{$d$-bounded over $k$} if there
  exists a set~$I\subseteq k[x_1,\dots,x_m]$ of polynomials all of
  degree at most~$d\in \mathbb{N}$ such that $X=\mathbf V(I)$.
\end{itemize}
\Cref{eq:variety} provides a finitely represented approximation of~$G$
since, by Hilbert's Basis Theorem, there exists a finite subset
$I'\subseteq I$ such that $\mathbf V(I)=\mathbf V(I')$.  We want to
compute the best possible such approximation, using the Zariski
topology on $K^m$ (\MB{that is, the topology whose closed sets are
  the algebraic sets}): the \emph{Zariski closure} $\Zcl G$ of a set
$G\subseteq\KK^m$ is the smallest algebraic set that contains~$G$.

\subsubsection{Linear Algebraic Groups}\label{sub:lag}
A \emph{linear group} over~$k$ is a subgroup of the \emph{general
  linear group} $\GL_n(k)$ of all invertible $n\times n$ matrices with
entries in~$k$.  We embed the latter as an algebraic set in $K^{m}$
of dimension $m:=n^2+1$ by writing
\begin{equation}\label{eq:gln}
  \GL_n(K):=\{(M,y)\in K^{n^{2}+1}: \det(M)\cdot y=1\}\;.
\end{equation}
A \emph{linear algebraic group} is a Zariski-closed subgroup of
$\GL_n(\KK)$
~\citep[Chap.~7]{HumphreysLAG}.
Given algebraic sets $X\subseteq K^m$ and $Y\subseteq K^p$, a
\emph{regular map} from $X$ to $Y$ is a function $\Phi\colon
X\rightarrow Y$ defined by~$p$ polynomials $f_1,\ldots,f_p \in
K[x_1,\ldots,x_m]$ by $\Phi(\boldsymbol{a}) :=
(f_1(\boldsymbol{a}),\ldots,f_p(\boldsymbol{a}))$ for all
$\boldsymbol{a} \in X$; regular maps are (Zariski)
continuous~\citep[Sec.~1.5]{HumphreysLAG}.  Under the
identification in~\eqref{eq:gln}, matrix multiplication is a regular
map $\GL_n(K)\times \GL_n(K) \rightarrow \GL_n(K)$, and, by Cramer's
rule, matrix inverse is also a regular map $\GL_n(K) \rightarrow
\GL_n(K)$. 

\subsubsection{Irreducible Sets and Connected Components}\label{sub:cc}
An algebraic set $X\subseteq K^m$ is said to be \emph{irreducible} if
it is non-empty and cannot be written as the union $X=X_1\cup X_2$ of
two algebraic proper subsets $X_1$ and $X_2$.  Because the Zariski
topology is Noetherian, every algebraic set~$X$ can be written as the
finite union of its \emph{irreducible components}, i.e., of its
maximal irreducible subsets.  The closure $\Zcl{\Phi(X)}$ of the image
of an irreducible set~$X$ under a regular map~$\Phi$ is again
irreducible~\citep[Sec.~1.3]{HumphreysLAG}.

As any linear algebraic group $G$ is also an algebraic set, it is the
finite union of its irreducible components, which turn out to be disjoint.
We denote by $\IdComp{G}$ the irreducible component of $G$ containing
the identity, called its \emph{identity component}.  Then $\IdComp{G}$
is a normal subgroup of $G$ and the irreducible components of $G$ are
the cosets of $\IdComp{G}$ in~$G$, called its \emph{connected
  components}~\citep[Sec.~7.3]{HumphreysLAG}.

\subsubsection{Jordan-Chevalley Decomposition}
\label{sub:la}
The starting point in the proof of
\cref{th:bound_finitely_generated_closure} is the following
decomposition of invertible matrices.  Recall that a matrix $g \in
\GL_n(k)$ is called \emph{nilpotent} if there exists some positive
integer $p$ such that~$g^p = 0$, it is \emph{unipotent} if $g - \GId$
is nilpotent (where $\GId$ denotes the identity matrix), and
\emph{semisimple} if it is diagonalisable over~$K$.  The
\emph{Jordan-Chevalley decomposition} writes a matrix~$g$ as $g = g_s
+ g_n$, where $g_s$ is semisimple, $g_n$ is nilpotent, and~$g_s$
and~$g_n$ commute.  It follows that $g=g_sg_u$, where
$g_u:=\GId+g_s^{-1}g_n$ is unipotent, and~$g_s$ and~$g_u$ commute. This
decomposition is unique.  For a linear algebraic group~$G$, we denote
by $G_s:=\{g_s:g\in G\}$ and $G_u:=\{g_u:g\in G\}$ the semisimple and
unipotent parts of~$G$, respectively. It is a standard fact that
both~$G_s$ and~$G_u$ are closed sets contained
in~$G$~\citep[Sec.~15.3]{HumphreysLAG}, but note that they might not
be groups.

\subsubsection{Main Argument (\Cref{sec:main})}
\label{sub:main}
\MB{Let $S\subseteq\GL_n(\QQ)$ be a finite set of rational matrices and
consider the closure $G=\Zcl{\Gen{S}} \leq GL_n(\QQbar)$ of
the group generated by $S$.}
Let $U:=\Zcl{\Gen{G_u}}$ be the closure of
  the group generated by~$G_u$ as per~\cref{sub:la}; note
that~$U$ 
might contain non-unipotent elements
.
Our main technical result proven in \cref{sec:general} and
summarised next in \cref{sub:general} identifies a subgroup~$H$ of~$G$
with good properties.
\MahsaShreier{Do we require finiteness of $S$ for Lemma 7? It seems to me that we only require this assumption when we want to compute the degree bound and need to compute a finite set of generators for $H'$? Does the obtained structural property holds for all (and not necessarily finitely-generated) algebraic groups?  }

\begin{restatable*}[Quantitative Structure Lemma]{lemma}{quantitative}\label{lem:hrushovski_alggroup}
  Let $n\in \NN$, $S\subseteq \GL_n(\QQ)$ be a \MahsaShreier{finite?} set of
    matrices, \done{and $G=\Zcl{\Gen S}$}. Write $U:=\Zcl{\Gen{G_u}}$ for the closure of the
  subgroup generated by the unipotent elements of~$G$.  Then there
  exists a \done{closed} normal subgroup $H$ of $G$ such that $U \trianglelefteq
  H \trianglelefteq G$ and
  \begin{enumerate}[(i)]
  \item\label{quant1} $H / U$ is commutative,
  \item\label{quant2} $H$ is a normal subgroup of $G$ of index at
    most $J'(n):=(2^{3+2^{4(n^2+D)^D + 1}})!$,
    where $D:=(n^3+1)^{2^{3n^2}}$.
  \end{enumerate}
\end{restatable*}

We exploit these two properties in \cref{sec:main} to obtain a degree
bound for~$G$ as follows.
\begin{enumerate}
\item We find a finite set of generators~$S'$ for~$H=\Zcl{\Gen{S'}}$
  using the finite-index property.  This relies on Schreier's Lemma;
  see
  \cref{lem:finite_index_subgroup_generators} in~\cref{sec:main}. 
\item We deduce a degree bound for $H$ using the commutativity
  of~$H/U$.  Indeed, we have that $$H=\Zcl{\big(\prod_{g\in
      S'}\Zcl{\Gen{g}}\big)\cdot U}\;.$$ In other words, we only need
  to compute the closure of individual elements $\Zcl{\Gen{g}}$ for
  $g\in S'$.  The latter can be done using \citeauthor*{Mas88}'s bound on the
  group of multiplicative relations among the eigenvalues of a
  rational matrix; see \cref{th:masser,claim:main-bound-h}
  in~\cref{app:main}.
\item Finally, by writing~$G$ as the union of~$J'(n)$ cosets of~$H$, we
  obtain a degree bound on~$G$ from the degree bound on~$H$.
\end{enumerate}

\subsubsection{Quantitative Structure Lemma (\Cref{sec:unipotent,sec:general})}
\label{sub:general}
Here we borrow several insights from \citeauthor*{HR}'s approach on
proto-Galois groups~\citep{HR} and \citeauthor*{Feng}'s complexity
analysis of the algorithm~\citep{Feng}.  First of all, and drawing
from the works of \citeauthor*{HR} and \citeauthor*{Feng}, we show in
\cref{sec:unipotent} that $U=\Zcl{\Gen{G_u}}$ is normal
in~$G$ and that one can write polynomials for~$U$ whose degree
depends only on the dimension~$n$.

\MB{ We use the fact that the quotient of the linear algebraic
  group~$G$ by its normal subgroup~$U$ is a linear algebraic group of
  higher dimension, which can be obtained through the \emph{quotient
  map}~$\phi_U\colon
  G\to\GL_p(\QQbar)$~\citep[Sec.~11.5]{HumphreysLAG}.  This is a
  regular map that is also a group homomorphism (thus
  $G/\ker\phi_U\cong \phi_U(G)$ by the First Isomorphism Theorem) such
  that $U=\ker\phi_U$.}  Furthermore, since~$U$ contains~$G_u$, we
know that the quotient~$G/U$ consists solely of semisimple elements
\MB{when viewed 
  as a subgroup
  of $\GL_p(\QQbar)$}.  Then we can apply the following lemma,
  summarised next in \cref{sub:semisimple}, that yields another linear
  algebraic group~$P$, called the \emph{pistil} of~$G/U$.

\begin{restatable*}[Pistil Lemma]{lemma}{pistil}\label{lem:hrushovski_semisimple}
Let $n\in \NN$.
Let $G \leqslant \GL_n(\QQbar)$  be the closure of a group generated by rational matrices and assume that $G$ only consists of semisimple elements.
Then
    there exists \done{a closed group} $P\leqslant\GL_n(\overline{\QQ})$ such that 
    \begin{enumerate}[(i)]
    \item\label{pistil1} $P$ is commutative,       
     \item\label{pistil2} $G \cap P$ is a normal subgroup of $G$ of index at most 
        $J(n):=(2^{2n^2+3})!$.
    \end{enumerate}
\end{restatable*}


The idea is then to ``pull-back'' the pistil~$P$ from~$G/U$
\done{through the quotient map~$\phi_U$} to the
original group~$G$.  
  We use the following quantitative \done{statement} 
with a bound on the degree of the polynomials defining the quotient
map and a bound on the dimension.


\begin{theorem}[{\citealt[Prop.~B.6]{Feng}}]\label{th:homomorphism_quotient}
  Let $d,n\in\NN$ and let $G\subseteq\GL_n(K)$ be a closed group.
  Let $H$ be a normal subgroup of $G$ that is $d$-bounded over $k$.
  Then there exists $p\leqslant 2^{2(n^2+d)^d}$ and a homomorphism
  $\phi_H\colon G\to\GL_p(K)$ defined by polynomials over~$k$ of
  degree bounded by $d(n^2+d)^d2^{2(n^2+d)^d}$ such that $H=\ker\phi_H$.
\end{theorem}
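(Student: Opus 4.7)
The plan is to apply a quantitative Chevalley-type construction, producing a representation of~$G$ whose kernel is exactly~$H$. I begin by considering the right-translation action of~$\GL_n$ on its coordinate ring $k[\GL_n]=k[x_{ij},y]/(\det(x)\cdot y-1)$, defined by $(\rho(g)f)(x):=f(xg)$. The key property is that this action preserves the degree filtration: since the entries of~$xg$ are linear in those of~$x$, polynomials of degree at most~$d$ are sent to polynomials of degree at most~$d$. Let $V:=k[\GL_n]_{\leq d}$; its dimension is bounded by $\binom{n^2+1+d}{d}\leq (n^2+d)^{d+1}$, and $\rho|_V\colon \GL_n\to\GL(V)$ is defined by polynomials of degree at most~$d$ in the entries of~$g$.

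Inside~$V$, I single out $V_0:=V\cap I(H)$, which by the $d$-boundedness of~$H$ generates the full vanishing ideal of~$H$. The crucial observation is that the normality of~$H$ in~$G$ is exactly what ensures that $\rho(g)V_0\subseteq V_0$ only when $g\in H$: indeed, $\rho(g)V_0\subseteq I(H)$ amounts to the vanishing of every $f\in V_0$ on the coset~$Hg$, and since $V_0$ generates $I(H)$ this forces $Hg\subseteq \mathbf{V}(V_0)=H$, hence $g\in H$. Thus $H$ coincides with the $G$-stabilizer of the subspace $V_0\subseteq V$.

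The next step is to upgrade ``stabilizer of a subspace'' into ``kernel of a homomorphism''. I would use a Pl\"ucker-type embedding: setting $r:=\dim V_0$, the line $L:=\bigwedge^{r} V_0\subseteq \bigwedge^{r} V$ has $G$-stabilizer equal to~$H$, and~$H$ acts on~$L$ through the determinant character $\chi\colon H\to K^\ast$. To arrange that the final representation has kernel \emph{exactly}~$H$ rather than $\ker\chi$, one passes to a tensor/polynomial construction on which the character becomes trivial; concretely, a suitable $G$-equivariant subspace of polynomial functions of degree at most~$2d$ on~$V$ (or equivalently of an appropriate piece of $V^{\otimes 2d}$) provides a representation of dimension at most $(n^2+d)^{2d^2}$. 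The homomorphism $\phi_H$ is then obtained by composing~$\rho|_V$, of polynomial degree~$d$, with this Pl\"ucker/tensor step, giving polynomials of degree at most $d\cdot (n^2+d)^{2d^2}\leq d(n^2+d)^{2d^2+1}$.

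The main obstacle is the character-trivialisation step: one must exploit normality carefully to guarantee $\ker\phi_H=H$ exactly, rather than some subgroup containing $\ker\chi$, and simultaneously keep both the target dimension and the polynomial degree under the stated bounds. This is what forces the tensor exponent up from~$1$ to roughly~$2d$ and is responsible for the $2d^2$ appearing in the final exponent.
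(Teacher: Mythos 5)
The paper does not prove this statement: it is imported verbatim from Feng (cited as Prop.~B.6 there), so there is no in-paper argument to compare against. I therefore assess your reconstruction on its own merits.

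Your first two steps are the right starting point and are in the spirit of Chevalley's theorem. One correction along the way: when you observe that $\{g\in G : \rho(g)V_0 \subseteq V_0\}=H$, normality of~$H$ is \emph{not} what makes this work. It suffices that $H$ is closed and that the ideal generated by~$V_0$ equals~$I(H)$ (which is exactly the $d$-boundedness hypothesis): if $\rho(g)V_0\subseteq I(H)$ then every $f\in V_0$ vanishes on~$Hg$, so $Hg\subseteq\mathbf V(V_0)=H$ and hence $g\in H$; the converse is immediate because~$H$ is a group. Normality is only needed later, to upgrade a stabilizer to a kernel.

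The real gap is quantitative, and it begins at the Pl\"ucker step. Setting $r:=\dim V_0$, the line $\bigwedge^{r}V_0$ sits inside $\bigwedge^{r}V$, whose dimension is $\binom{\dim V}{r}$, and this is not controlled by the stated bound. Concretely, take $d=1$ and $H$ the copy of $\GL_{n-1}$ embedded in the top-left block of~$\GL_n$ (with the remaining row, column and diagonal entry pinned by linear equations). Then $H$ is $1$-bounded, $\dim V=n^2+2$, and $V_0$ is spanned by the $2n-1$ linear forms cutting out~$H$, so $r=2n-1$ and $\dim\bigwedge^{r}V=\binom{n^2+2}{2n-1}$, which grows superpolynomially in~$n$; the theorem claims $p\leq(n^2+1)^2$. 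So the naive exterior-power argument overshoots the target dimension by an enormous margin already at $d=1$.

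Your proposed remedy — replacing $\bigwedge^{r}V$ by ``a suitable $G$-equivariant subspace of polynomial functions of degree at most $2d$ on~$V$'' — is asserted rather than constructed. You would need to exhibit such a subspace~$W$, prove it is $G$-stable, prove $H$ acts trivially on it, and, crucially, prove the kernel of $G\to\GL(W)$ is \emph{exactly}~$H$ rather than some strictly larger normal subgroup. The one-line claim that ``normality forces the tensor exponent up to roughly~$2d$'' is precisely where the entire content of the theorem lies, and it is not argued. Note also a mismatch on the degree side: a genuine degree-$\leq 2d$ tensor or symmetric-power construction on~$V$ would produce matrix entries of degree about~$2d^2$ in~$g$, whereas the theorem allows the vastly larger $d(n^2+d)^{2d^2+1}$; this suggests either that the stated bound is a very loose overestimate of a tighter construction, or (more likely) that Feng's argument proceeds differently from what you sketch. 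Either way, the character-trivialization step needs to be written out, with the dimension and degree tracked through it, before the quantitative claim can be regarded as proved.
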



\begin{figure}[tbp]
    \centering
    \scalebox{1}{\begin{tikzpicture}[auto,scale=.8,
            app/.style={->},
            consequence/.style={->,green!65!black,decorate,decoration={coil,aspect=0,post length=0.5ex}},
            prop/.style={->,blue!80!blue,dashed}
            ]
        \draw node (gln) {$\GL_n(\QQbar)$};
        \draw node[right=2.6cm of gln] (glp) {$\GL_p(\QQbar)$};
        \draw node[right=3cm of glp] (glm) {$\GL_q(\QQbar)$};
        \draw node[below=.6cm of gln] (g) {$G$};
        \draw (g -| glp) node (g/u) {$G/U$};
        \draw (g -| glm) node (fin) {$\phi_{P}(G/U)$};
        \draw node[below=1cm of g] (h) {$H$};
        \draw (h -| g/u) node (h') {$P$};
        \draw[app] (g) -- (g/u) node[midway,above] {$\phi_U$};
        \draw[app] (g/u) -- (fin) node[midway,above] {$\phi_{P}$};
        \draw[app] (h') -- (h) node[midway,above] {$\phi_U^{-1}$};
        \draw[consequence] (g) -- (h) node[midway,left] {\cref{lem:hrushovski_alggroup}};
        \draw[consequence] (g/u) -- (h') node[midway,right] {\cref{lem:hrushovski_semisimple}};
        \draw[prop] (fin) edge[bend left=15] node[near start,inner sep=2pt,rotate=20,below]{\textcolor{blue}{finite index property}} (h');
        
        
        \draw[red!80!black, align=center] ($(g/u)!0.5!(glp)$) node {semisimple\\elements}
                            ($(fin)!0.5!(glm)$) node {finite};
    \end{tikzpicture}}
    \caption{\label{fig:high_level}Assembling the arguments of
      \cref{lem:hrushovski_alggroup,lem:hrushovski_semisimple}: we
      take two consecutive quotients to obtain a finite group.  We
      then pull-back to obtain a group~$H$ with the desired
      properties.  Note that the dimension increases after each
      quotient and the lemmata relate~$p$ and~$q$ to~$n$.
    }
\end{figure}

Using \cref{th:homomorphism_quotient} as described
in~\cref{fig:high_level}, we define $H\leq \GL_n(\QQbar)$ to be the
preimage of the pistil~$P$ under~$\phi_U$.  Thanks to the properties
of~$P$ and~$\phi_U$, $H$ has the properties desired for the proof of
\cref{th:bound_finitely_generated_closure}, namely~$H$ is a normal
subgroup of~$G$ of finite index depending only on~$n$, and~$H/U$ is
commutative.

\subsubsection{Pistil Lemma (\Cref{sec:semisimple})}
\label{sub:semisimple}
Let $G=\Zcl{\Gen{S}}\leqslant\GL_p(\QQbar)$ be \done{the closure of a group}
generated by a \MahsaShreier{finite?} set~$S$ of rational matrices. Assume that $G$
consists exclusively of semisimple elements.
\MahsaShreier{Mahsa: Do we require $S$ to be finite in above? It seems to me that we should not assume this, specially as $G/U$ might not be finitely generated? this is probably just a typo as finiteness is not assumed in statement of Lemma 4. }

Since the identity component $\IdComp{G}$ of~$G$ is connected and all elements are semisimple, it is a torus.
Recall that an algebraic group $T \leqslant \GL_n(K)$ is said to be an \emph{algebraic torus} (henceforth simply a \emph{torus}) if it is isomorphic to a product of $n$ copies of the multiplicative group $K^\times$ of the base field $K$. For a connected algebraic group  $T$ the following conditions are equivalent: (i) $T$ is a torus, (ii) $T$ consists only of semisimple elements, (iii) $T$, considered as a matrix group, can be made diagonal~\citep[Thm~3.1 on p.~3]{borel66}. Furthermore, it holds that the \emph{maximal tori} in $\GL_n(\QQbar)$ are exactly those groups that are conjugate to the group of diagonal matrices in $\GL_n(\QQbar)$. 
As
stated above in~\cref{sub:cc}, $\IdComp{G}$ is a normal subgroup
of~$G$ and $G/\IdComp{G}$ is finite.  However we do not know how to
directly bound the degree of $\IdComp{G}$ nor the cardinality
of~$G/\IdComp{G}$ in terms of the set~$S$ of generators and the
dimension~$p$.  Instead, again following~\cite{HR}, we identify
another group~$P\leqslant\GL_p(\QQbar)$, dubbed the \emph{pistil}
of~$G$, and defined as the intersection of all the maximal tori that
contain~$\IdComp{G}$.  This construction is 
explained in \cref{sec:semisimple}.  Since maximal tori
are 1-bounded, commutative\done{, and closed}, $P$ is also 1-bounded,
commutative\done{, and closed}. Observe that the quotient $G/(G\cap P)$ is finite because
$G\cap P$ includes~$\IdComp{G}$. We can
bound the index $[G:G\cap P]$ of this quotient in terms
of~$p$, using the fact that the quotient map~$\phi_{P}$ 
defined by \cref{th:homomorphism_quotient} embeds into a
\emph{finite} subgroup of~$\GL_q(\QQbar)$ for some~$q$ depending only
on~$p$; see \cref{fig:high_level}.

\subsubsection{About Field Extensions}\label{sub:prelim}
The reader might have noticed that, in order to apply
\cref{th:homomorphism_quotient}, we need to show that $H$ is
$d$-bounded \emph{over~$\QQ$}.  We resort routinely to the following
folklore lemma, where $L/k$ is any field extension and $\Aut(L/k)$ the
group of automorphisms of~$L$ that fix~$k$ (see \appref{app:prelim}
for a proof).
\begin{restatable}{relemma}{boundkgroup}\label{lem:bound_k_group_is_k_bounded}
  Fix $m\in \NN$, and let $X\subseteq L^m$ be a $d$-bounded algebraic
  set. Then $X$ is $d$-bounded over~$k$ if any of the following holds:
  \begin{enumerate}[(i)]
  \item\label{itm:k_bounded} $X$ is stable under~$\Aut(L/k)$ for some subfield $k$ of $L$, or
  \item\label{itm:dense} $X\cap k^m$ is dense in $X$.
  \end{enumerate}
\end{restatable}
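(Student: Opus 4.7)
The plan is to work at the level of polynomials of bounded degree and reduce both parts to a single linear-algebraic descent statement. Let $V := L[x_1,\dots,x_m]_{\leq d}$ denote the finite-dimensional $L$-vector space of polynomials of degree at most~$d$, and set $I_d := \{f \in V : f|_X \equiv 0\}$, the degree-$\leq d$ piece of the vanishing ideal of~$X$. Since $X$ is $d$-bounded, $\mathbf V(I_d) = X$. If I can show that $I_d$ admits an $L$-basis of polynomials lying in $k[x_1,\dots,x_m]_{\leq d}$, then $J := I_d \cap k[x]_{\leq d}$ will satisfy $\mathbf V(J) = \mathbf V(I_d) = X$, witnessing that $X$ is $d$-bounded over~$k$. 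Thus both parts reduce to showing that $I_d$ is defined over~$k$ as an $L$-subspace of $V = k[x]_{\leq d} \otimes_k L$.

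For item~\eqref{itm:dense} I would argue by plain linear algebra: since $X \cap k^m$ is Zariski-dense in $X$ and the zero locus of any polynomial is Zariski-closed, a polynomial $f \in V$ belongs to $I_d$ if and only if $f(a) = 0$ for every $a \in X \cap k^m$. When $f$ is written in the monomial basis of~$V$, these are $k$-linear conditions on its coefficients. Because $L$ is free (hence flat) over~$k$, the $L$-solutions of a $k$-linear system coincide with the $L$-span of its $k$-solutions, yielding $I_d = (I_d \cap k[x]_{\leq d}) \otimes_k L$ as required.

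For item~\eqref{itm:k_bounded} I would first observe that $I_d$ is stable under the natural coefficient-wise action of $\Aut(L/k)$ on~$V$: the identity $\sigma(f(a)) = (\sigma f)(\sigma a)$, valid for $\sigma \in \Aut(L/k)$, $f \in V$ and $a \in L^m$, combined with the $\Aut(L/k)$-stability of~$X$, yields $(\sigma f)(a) = \sigma(f(\sigma^{-1}a)) = 0$ for every $a \in X$, so $\sigma f \in I_d$. I would then descend $I_d$ to~$k$ by reducing to a finite Galois subextension: since $I_d$ is finite-dimensional, any finite $L$-basis has coefficients in some finite subextension $k \subseteq k' \subseteq L$, which is separable by perfectness of~$k$; letting $L' \subseteq L$ be the Galois closure of $k'/k$, the subspace $I_d \cap L'[x]_{\leq d}$ is $\Gal(L'/k)$-stable, and classical finite Galois descent (e.g.\ via a normal basis of $L'/k$, or averaging with the trace $\sum_\sigma \sigma$) produces a $k$-basis, which still $L$-spans $I_d$.

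The only genuinely non-formal step is this last one: in the concrete setting $L = \QQbar$, $k = \QQ$ relevant to the paper the extension $L/k$ is infinite, so finite Galois descent cannot be invoked on all of $L$ at once. The reduction to a finite normal subextension via the finite-dimensionality of~$I_d$ is the main technical point, after which everything collapses to standard linear algebra; case~\eqref{itm:dense} is essentially a flatness argument and presents no real obstacle.
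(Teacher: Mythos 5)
Your proposal is correct, and for item~\eqref{itm:dense} it takes a genuinely different and more elementary route than the paper. The paper proves~\eqref{itm:dense} by first showing that $X$ is stable under $\Aut(L/k)$ (using density of $X\cap k^m$ to transport vanishing of a polynomial $p$ to vanishing of $p^\sigma$) and then invoking~\eqref{itm:k_bounded}; your flatness argument bypasses Galois theory entirely by observing that the degree-$\leq d$ slice $I_d$ of the vanishing ideal is the $L$-solution set of a $k$-linear system, so it is spanned over~$L$ by its $k$-rational points. This is both shorter and slightly more general, since it imposes no separability hypotheses on $L/k$. For item~\eqref{itm:k_bounded} your route is the same in substance as the paper's: you reduce to a finite Galois subextension (the paper takes the normal closure of the field generated by the coefficients of finitely many defining polynomials; you take the field generated by the coefficients of an $L$-basis of $I_d$) and descend; the paper spells out the descent explicitly via the trace form $\Tr(\alpha)=\sum_{\sigma\in\Gal(F/k)}\sigma(\alpha)$ and the dual basis $M^{-1}=(c_{ij})$, showing that any vanishing $p$ is an $F$-linear combination of $\Tr(pb_1),\dots,\Tr(pb_n)\in k[\boldsymbol x]_{\leq d}$, whereas you cite Galois descent as a black box (correctly noting that the trace average is one way to carry it out). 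One small caveat shared by both arguments: the reduction to a finite subextension and the extension of $\Gal(L'/k)$-elements to $\Aut(L/k)$ tacitly require $L/k$ to be algebraic (and you invoke perfectness of $k$), which matches the paper's intended setting $L=\QQbar$, $k=\QQ$ but is worth stating explicitly since the lemma is phrased for an arbitrary extension $L/k$. Your choice to work uniformly with the finite-dimensional space $I_d\subseteq L[\boldsymbol x]_{\leq d}$ is a clean unifying framing that the paper does not make explicit.
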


\section{Pistil Lemma}
\label{sec:semisimple}


Recall that a matrix $g\in \GL_n(\overline{\QQ})$ is \emph{semisimple}
if it is diagonalisable over~$\overline{\QQ}$.  In this section, we
study structural properties of algebraic groups~\done{$G=\Zcl{\Gen S}$
defined as closures of groups of semisimple rational matrices.} 
(We note in passing that such a group $G$ need not be ``semisimple as a
subgroup'' of $\GL_n(\overline{\QQ})$, that is, conjugate to a group
of diagonal matrices.)  We show 
that such groups are virtually abelian: they contain an abelian
subgroup of finite index.  We define \done{for this} a
group~$P\leqslant\GL_n(\QQbar)$, which we call the \emph{pistil}
of~$G$, and show that $P\cap G$ is a normal abelian subgroup of~$G$
with finite index.


\pistil

\begin{proof}
The group $P$ will be defined to encompass the identity component $\IdComp{G}$ of $G$. 
%
By definition, $\IdComp{G}$ is connected, and by assumption, $\IdComp{G}$ is comprised exclusively of semisimple elements, hence it is a torus~\cite[Thm~3.1 on p.~3]{borel66}.
%
\begin{figure}
    \centering
    \scalebox{1}{\begin{tikzpicture}
        \begin{scope}[scale=0.33]
            \draw[green!65!black,rotate=-7] (0,-4.65) ellipse[x radius=1.9, y radius=5.6] (0,-5.5) node {$G$};
            \draw (0,-0.5) circle[radius=1] node{$\IdComp{G}$};
            \begin{scope}[shift={(0,0.3)}]
                \draw[red!70!black] {[rotate=30] (2.5,0) ellipse[x radius=5, y radius=3]}
                    {[rotate=-30] (2.5,0) ellipse[x radius=5, y radius=3]}
                    {[rotate=70] (2.5,0) ellipse[x radius=5, y radius=3]}
                    {[rotate=100] (2.5,0) ellipse[x radius=5, y radius=3]}
                    {[rotate=150] (2.5,0) ellipse[x radius=5, y radius=3]}
                    {[rotate=210] (2.5,0) ellipse[x radius=5, y radius=3]};
                \clip [rotate=-30] (2.5,0) ellipse[x radius=5, y radius=3];
                \clip [rotate=30] (2.5,0) ellipse[x radius=5, y radius=3];
                \clip [rotate=70] (2.5,0) ellipse[x radius=5, y radius=3];
                \clip [rotate=100] (2.5,0) ellipse[x radius=5, y radius=3];
                \clip [rotate=150] (2.5,0) ellipse[x radius=5, y radius=3];
                \clip [rotate=210] (2.5,0) ellipse[x radius=5, y radius=3];
                \path[fill=orange!90!yellow,fill opacity=0.25] (-3,-3) rectangle (3,3);
                \node[orange!80!red] at (0,1.5) {$P$};
            \end{scope}
        \end{scope}
        \begin{scope}[shift={(4,1)}]
            \draw (0,0) node[anchor=west] (top) {\cref{lem:hrushovski_semisimple}};
            \foreach \i/\txt in {1/$P\leqslant\GL_n(\overline{\QQ})$,
                                 2/$\IdComp{G} \vartriangleleft (G\cap P)\vartriangleleft G$,
                                 3/$[G:G\cap P]\leqslant J(n)$,
                                 4/{$P$ is commutative},
                                 5/{$P$ is 1-bounded}} {
                \draw[anchor=west] (0.65,-\i*3.5ex) node (n-\i) {\txt};
                \draw[->] ($(top.south west)+(2ex,0)$) |- (n-\i);
            }
        \end{scope}
    \end{tikzpicture}}
    \caption{\label{fig:illustration_hrushovski_semisimple} Graphical
      representation of \cref{lem:hrushovski_semisimple}: given a
       group $G$ that is the closure of a group generated by rational matrices and is comprised of semisimple elements, we define the pistil~$P$ as the intersection of the
      maximal tori containing~$\IdComp{G}$; note that~$P$ is not
      necessarily a subgroup of~$G$.}
\end{figure}  
%
%
Now $\IdComp{G}$, being a torus, is contained in some maximal torus in
$\GL_n(\QQbar)$.  We define the \emph{pistil} of~$G$ to be the
intersection of all maximal tori in $\GL_n(\QQbar)$ that contain
$\IdComp{G}$;
see \cref{fig:illustration_hrushovski_semisimple}, where the maximal
tori containing~$\IdComp{G}$ are depicted as red ellipses. 
 We now show that the pistil~$P$ has all the announced properties.
\medskip


%
\done{First note that, as an intersection of closed groups, $P$ is
closed.}

Regarding~\ref{pistil1}, observe that the commutativity of $P$ follows from
the fact that it is obtained by intersecting commutative groups.


Regarding~\ref{pistil2}, 
to prove that $G \cap P$ is a normal subgroup of~$G$, it suffices to 
\color{black}
observe that the collection of all maximal tori
containing~$\IdComp{G}$ is stable under conjugation by all
elements~$g\in G$, and hence so is~$P$.
More precisely, fix $g \in G$
  and observe that for every maximal torus~$T$ containing
  $\IdComp{G}$, $gTg^{-1}$ is also a maximal torus, which moreover
  contains~$\IdComp{G}$ because $g\IdComp{G}g^{-1}=\IdComp{G}$ since $\IdComp{G}$ is a normal subgroup of~$G$.  We thus
  see that conjugation by~$g$ induces a bijective self-map on the
  collection of maximal tori containing~$\IdComp{G}$.  It follows that
  $gPg^{-1}=P$ for all~$g \in G$, i.e., $G$ normalises~$P$ and hence
  $G\cap P$ is a normal subgroup of~$G$.
\medskip

The main difficulty of the proof lies
in computing an upper bound
on the index~$[G:G\cap P]$. For
this, we will argue that the quotient $G/ (G \cap P)$ can be
embedded as a finite subgroup of rational matrices in~$\GL_p(\QQ)$
for some $p$. Recall from~\cite{KP2002} that such a finite subgroup of rational matrices is
conjugate to a finite subgroup of integer matrices, and furthermore
that the order of any such finite subgroup of integer matrices divides
$(2p)!$~\cite{Newman72,Guralnick2005OrdersOF,BumpusHKST20}
.\footnote{%
\done{\Citet*{Friedland97} showed that the better bound $2^nn!$
holds for all $n\geq 72$ but this relies on an incomplete manuscript
by \citet*{Weisfeiler85}.}} 

\begin{fact}\label{theo-order-finite-G}
  For all $p\in \NN$, every finite subgroup of $\GL_{p}(\QQ)$ has
  order at most $(2p)!$.
\end{fact}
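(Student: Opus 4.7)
The plan is to follow the classical two-step reduction going back to Minkowski: first conjugate the finite group $G\leqslant\GL_p(\QQ)$ into $\GL_p(\ZZ)$ via a lattice-averaging trick, then bound the order of any finite subgroup of $\GL_p(\ZZ)$ by exhibiting a faithful action on a set of at most $2p$ elements.

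For the first step, I would form the lattice $\Lambda:=\sum_{g\in G}g\cdot\ZZ^p\subseteq\QQ^p$. As a finite sum of translates of $\ZZ^p$, it is a $\ZZ$-module of full rank, and it is clearly stable under the $G$-action. Choosing a $\ZZ$-basis of $\Lambda$ produces $M\in\GL_p(\QQ)$ with $M^{-1}GM\leqslant\GL_p(\ZZ)$, so it suffices to prove the bound for finite subgroups of $\GL_p(\ZZ)$. For the second step, averaging the standard Euclidean form over $G$ gives a $G$-invariant positive definite symmetric integral bilinear form $B(u,v):=\sum_{g\in G}\langle gu,gv\rangle$, so $G$ acts on $\ZZ^p$ by $B$-isometries. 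One then identifies a $G$-invariant finite subset $R\subseteq\ZZ^p$ of cardinality at most $2p$ that spans $\QQ^p$; faithfulness of the action on $R$ follows from the spanning property, yielding an embedding $G\hookrightarrow S_{|R|}\leqslant S_{2p}$, and hence $|G|\leqslant(2p)!$.

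The main obstacle is isolating such a set $R$ of size at most $2p$. Naive candidates like the set of $B$-minimal nonzero vectors can be much larger than $2p$ (consider kissing numbers in high dimension), so a direct shortest-vector argument does not suffice. The sharp $(2p)!$ bound requires the more careful structural analysis carried out in~\cite{Newman72,Guralnick2005OrdersOF,BumpusHKST20}, which I would follow rather than reprove from scratch; a looser but more elementary alternative via Minkowski's reduction-mod-$\ell$ theorem embeds $G$ into $\GL_p(\mathbb{F}_\ell)$ for any odd prime $\ell$, but the resulting bound $|\GL_p(\mathbb{F}_\ell)|$ already exceeds $(2p)!$ for small $p$, so it does not yield the claimed inequality on the nose.
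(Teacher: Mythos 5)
The paper does not give a proof of this Fact: it treats it as known, citing~\cite{KP2002} for the conjugation into $\GL_p(\ZZ)$ and~\cite{Newman72,Guralnick2005OrdersOF,BumpusHKST20} for the $(2p)!$ bound. In that sense your proposal and the paper end up in the same place, namely deferring the sharp bound to the literature, and your lattice-averaging argument for the reduction to $\GL_p(\ZZ)$ is correct and standard (it is exactly what~\cite{KP2002} does). Your remark that mod-$\ell$ reduction alone only gives the weaker bound $|\GL_p(\mathbb{F}_\ell)|$ is also accurate.

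The concern is with the speculative outline of the second step. You propose to find a $G$-invariant spanning subset $R\subseteq\ZZ^p$ with $|R|\leq 2p$ and conclude via $G\hookrightarrow S_{|R|}\leq S_{2p}$. This is not how the cited references obtain the $(2p)!$ bound, and it is not clear that such an $R$ always exists; you yourself note that the obvious candidate (minimal vectors of the invariant form) can be far larger than $2p$, and no alternative construction is offered. The actual argument in the literature is arithmetic rather than combinatorial: Minkowski's theorem bounds, for each prime $q$, the $q$-part of $|G|$ by a quantity $\mu(p,q)$ (proved by the injectivity of reduction modulo suitable primes $\ell\geq 3$), giving that $|G|$ divides $M(p):=\prod_q q^{\mu(p,q)}$; one then checks by comparing $q$-adic valuations that $M(p)$ divides $(2p)!$. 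This is also why the statement in the references is a divisibility, not merely an upper bound. So while your reduction to $\GL_p(\ZZ)$ and your identification of the relevant references are both fine, the symmetric-group embedding you sketch should not be presented as the route taken there, and as written it leaves a genuine gap in the argument for the bound itself.
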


\noindent By
\cref{theo-order-finite-G}, 
the
size of the quotient $G/ (G \cap P) \in \GL_p(\QQ)$ depends on the
dimension~$p$, which in turn by~\cref{th:homomorphism_quotient}
depends on~$n$ and the maximal degree of the defining polynomials of $G \cap
P$. The latter is not directly accessible to us; however, we
first show the following claim 
and will later bound~$p$ through the pistil and
its normaliser.

\begin{claim}
  \label{claim:pistil-bounded}
  $P$ is $1$-bounded over $\QQ$.
\end{claim}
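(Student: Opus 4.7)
The plan is to apply the folklore lemma \cref{lem:bound_k_group_is_k_bounded} via its Galois-stability criterion. We have already argued that $P$ is $1$-bounded over $\QQbar$---as an intersection of maximal tori, each of which is a $\GL_n(\QQbar)$-conjugate of the $\QQ$-defined diagonal torus $D$ and thus cut out by degree-$1$ equations---so the task reduces to showing that $P$ is stable under the natural coordinate-wise action of $\Aut(\QQbar/\QQ)=\Gal(\QQbar/\QQ)$ on $\QQbar^{n^2+1}$.

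Fix $\sigma\in\Gal(\QQbar/\QQ)$. The key step is to show that $\sigma$ permutes the collection $\mathcal T$ of maximal tori of $\GL_n(\QQbar)$ containing $\IdComp{G}$, from which $\sigma(P)=\bigcap_{T\in\mathcal T}\sigma(T)=P$ is immediate. To this end I will verify three facts in turn. First, $G=\Zcl{\Gen S}$ is $\sigma$-stable, because $S\subseteq\GL_n(\QQ)$ is pointwise $\sigma$-fixed (and hence so is $\Gen S$) and $\sigma$ induces a Zariski homeomorphism of $\QQbar^{n^2+1}$ (mapping $\mathbf V(I)$ to $\mathbf V(I^\sigma)$, where $I^\sigma$ is the coefficient-wise image of $I$ under $\sigma$). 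Second, a Zariski homeomorphism of $G$ permutes its irreducible components, and $\sigma$ fixes the identity matrix, forcing $\sigma(\IdComp{G})=\IdComp{G}$. Third, the diagonal torus $D$ is $\QQ$-defined, hence $\sigma$-stable, and so each maximal torus $gDg^{-1}$ is carried by $\sigma$ to the maximal torus $\sigma(g)\,D\,\sigma(g)^{-1}$, which contains $\sigma(\IdComp{G})=\IdComp{G}$.

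I do not expect any of these three verifications to present a real obstacle: they each boil down to the standard identity $f(a)=0\iff f^\sigma(\sigma(a))=0$ for $f\in\QQbar[x_1,\ldots,x_m]$ and $a\in\QQbar^m$, together with the fact that every maximal torus in $\GL_n(\QQbar)$ is a conjugate of $D$. Stability of $P$ under $\Aut(\QQbar/\QQ)$ combined with its $1$-boundedness over $\QQbar$ then yields $1$-boundedness over $\QQ$ by \cref{lem:bound_k_group_is_k_bounded}, completing the claim.
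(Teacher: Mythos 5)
Your proposal is correct and follows essentially the same approach as the paper's own proof: establish that $P$ is $1$-bounded over $\QQbar$ from the fact that maximal tori are conjugates of the diagonal torus, then show $P$ is stable under $\Aut(\QQbar/\QQ)$ by arguing that each $\sigma$ fixes $\IdComp{G}$ (via $\sigma$-stability of $G$ and the fact that $\sigma$ fixes the identity, hence the identity component) and permutes the maximal tori containing $\IdComp{G}$, and finally invoke \cref{lem:bound_k_group_is_k_bounded}\ref{itm:k_bounded}. The only cosmetic difference is that you spell out the action of $\sigma$ on a maximal torus explicitly as $gDg^{-1}\mapsto\sigma(g)D\sigma(g)^{-1}$, while the paper simply notes that automorphisms preserve maximality of tori; both amount to the same thing.
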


\begin{claimproof}[Proof of \cref{claim:pistil-bounded}]
Recall that all maximal tori
of~$\GL_n(\QQbar)$ are conjugate to the group of all diagonal matrices
in $\GL_n(\QQbar)$ and hence $1$-bounded. As~$P$ is defined to be the
intersection of maximal tori, it follows that $P$ is $1$-bounded too.
  We further show that $P$ is stable under $\Aut(\QQbar/\QQ)$.
  Indeed,
  since~$G$ is defined over~$\QQ$, it is stable under the action
  $\Aut(\overline{\QQ}/\QQ)$ on~$\QQ^{n^2+1}$. Now observe that all
  automorphisms $\sigma\in\Aut(\QQbar/\QQ)$ preserve connectedness,
  and thus permute the connected components of~$G$. Note also that all
  automorphisms fix the identity matrix, and therefore must fix the
  connected component containing it, that
  is~$\IdComp{G}$.
  Finally, the automorphisms
  in~$\Aut(\QQbar/\QQ)$ also preserve maximality, and hence permute
  all maximal tori of~$\QQbar$. It follows that~$P$, being the
  intersection of all maximal tori of~$\GL_n(\QQ)$, is stable under
  $\Aut(\QQbar/\QQ)$, and thus we can apply
  \cref{lem:bound_k_group_is_k_bounded}\ref{itm:k_bounded}
  to conclude the proof.
\end{claimproof}

It remains to show
that the index of the normal subgroup $G \cap P$ in~$G$ is bounded
by~$J(n)$. 
Let $N:=\{g\in\GL_n(\overline{\QQ}):gP=Pg\}$ be the \emph{normaliser}
  of~$P$ in~$\GL_n(\QQbar)$,
that is, the largest subgroup of~$\GL_n(\QQbar)$ that normalises~$P$.
In particular, $G$ is a subgroup of~$N$.





  
  Since $P$ is defined over~$\QQ$, by
  quantifier elimination, $N$ is also defined over~$\QQ$.
  Since $P$ is both $1$-bounded over~$\QQ$
  (\cref{claim:pistil-bounded}) and a normal subgroup of $N$,
  by \cref{th:homomorphism_quotient}, there is a homomorphism
  $\phi_P\colon N\to\GL_p(\QQbar)$, for some $p\leq
  2^{2(n^2 + 1)}$, such that $\phi_P$ has kernel~$P$ and is defined
  over~$\QQ$.
  
  As $G \leqslant N$, 
   we can look at the
image of~$G$ under~$\phi_P$.  
Since
$\IdComp{G}$ has finite index in~$G$ and $\IdComp{G}\subseteq P$, 
$G \cap P$ also has finite index in~$G$, and therefore $\phi_P(G)$ is
finite.
  
    
  Furthermore, since $G=\Zcl{\Gen S}$ \done{for $S\subseteq{\GL_n(\QQ)}$} and $\phi_P$ is
  defined over~$\QQ$, the rational points of~$\phi_P(G)$
  are dense in~$\phi_P(G)$, that is,
  $\Zcl{\phi_P(G)\cap\QQ^{{p^2+1}}}=\phi_P(G)$. Since~$\phi_P(G)$ is finite,
  $\Zcl{\phi_P(G)\cap\QQ^{{p^2+1}}}=\phi_P(G)\cap\QQ^{{p^2+1}}=\phi_P(G)$ is a finite
  subgroup of~$\GL_p(\QQ)$.
The bound $J(n)$ on the index of $G\cap P$ now follows
  from \cref{theo-order-finite-G}.
\end{proof}


\section{Unipotent-Generated Groups}
\label{sec:unipotent}

Recall that a matrix $g\in \GL_n(\QQbar)$ is unipotent if $(g-\GId)^n=0$.
In this section, we investigate for a given algebraic group $G$ the
\done{closure}~$U$ \done{of the group} generated by the unipotent elements of $G$.  We remark that
in general $U$ need not consist exclusively of unipotent elements
(that is, $U$ differs in general from the so-called unipotent radical
of $G$).
%
%
We use instead the fact, noticed by \citet{HR} and
analysed more precisely by \citet{Feng}, that the \done{closure of the}
group generated by
the unipotent elements shares underlying similarities with the
unipotent radical.

Before proceeding with the proof
we recall the
definition of matrix exponential and matrix
logarithm~\citep[Sec.~15.1]{HumphreysLAG}. Given a matrix
$g\in \GL_n(\overline{\QQ})$, its exponential is
defined by
$\exp(g):=\sum_{n=0}^{\infty} \frac{g^n}{n!}$.

A matrix $g'$ is said to be a \emph{matrix logarithm} of~$g$ if $\exp(g) =
 g'$. Since the exponential function is not one-to-one in~$\QQbar$,
 some matrices may have more than one logarithm.  However, for a
 unipotent matrix~$g$ we can define the logarithm uniquely as
$\log(g):=\sum_{n=1}^{\infty} (-1)^{n-1}\frac{(g-\GId)^n}{n}.$

We also use the fact that, given a polynomial ideal, we have the
following degree bound due to \citet{Dube90} on the generators of the elimination ideal
obtained by eliminating some of the variables.

\begin{proposition}[{\cite[Cor.~8.3]{Dube90}}]\label{prop:elm=ideal}
	Let $I \subseteq K[x_1,\ldots,x_n]$ be an ideal generated by
	polynomials with degree at most~$d$.  \done{Then} for all $i\in \{1,\ldots,n\}$ the elimination ideal $I\cap K[x_1,\ldots,x_i]$ is generated by polynomials with degree at most~$(d+1)^{2^n}$. 
\end{proposition}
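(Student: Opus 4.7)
The plan is to reduce the statement to the conjunction of two standard ingredients from computational commutative algebra: the Elimination Theorem, and Dube's degree bound for Groebner bases of \cite{Dube90}. The proposition, as stated, is an immediate corollary of combining these two, so my proposal is essentially to spell out that reduction.

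First, I would fix an \emph{elimination term order} on the monomials of $K[x_1,\ldots,x_n]$ for the variables $x_{i+1},\ldots,x_n$. Concretely, one can use the pure lexicographic order with $x_n>x_{n-1}>\cdots>x_1$, or a product order whose first block orders $(x_{i+1},\ldots,x_n)$ and whose second block orders $(x_1,\ldots,x_i)$. The Elimination Theorem then asserts that if $G$ is a Groebner basis of $I$ with respect to such an order, the subset $G\cap K[x_1,\ldots,x_i]$ is a Groebner basis of the elimination ideal $I\cap K[x_1,\ldots,x_i]$, and hence in particular generates it.

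Second, I would invoke Dube's main result from \cite{Dube90}: any ideal $I\subseteq K[x_1,\ldots,x_n]$ generated by polynomials of degree at most $d$ admits, with respect to any admissible monomial order and in particular the elimination order fixed above, a Groebner basis whose elements have degree at most $(d+1)^{2^n}$. Restricting this Groebner basis to $K[x_1,\ldots,x_i]$ via the Elimination Theorem produces a generating set of $I\cap K[x_1,\ldots,x_i]$ consisting of polynomials of degree at most $(d+1)^{2^n}$, which is precisely the claim.

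The main obstacle here is the degree bound on Groebner bases itself. Dube's argument associates to the monomial ideal of leading terms a \emph{cone decomposition}, carefully controls the degrees appearing in such a decomposition by a double-exponential recurrence in the number of variables, and then lifts the bound back to the polynomial ring via the standard Groebner basis machinery. This combinatorial step is genuinely the technical heart of \cite{Dube90}, and I would import it as a black box rather than attempting to re-derive it. Once the bound is in hand, the reduction to the elimination ideal via the Elimination Theorem is entirely routine.
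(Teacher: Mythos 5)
The paper offers no proof of its own here—it cites the statement verbatim as Corollary~8.3 of~\cite{Dube90}. Your decomposition into (a)~Dub\'e's Gr\"obner-basis degree bound for arbitrary admissible monomial orders and (b)~the Elimination Theorem for elimination orders is precisely how Dub\'e derives that corollary from his main Theorem~8.2, so your approach coincides with the paper's.
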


  \begin{lemma}\label{lem:subgroup_generated_by_unipotents}
    Let $n\in \NN$ and $G\leqslant\GL_n(\overline{\QQ})$ be defined over $\QQ$. Let $U=\Zcl{\Gen{G_u}}$ be the closure of the subgroup
    generated by the unipotent elements of $G$.  Then $U$ is a normal subgroup of $G$ which is $(n^3+1)^{2^{3n^2}}$-bounded over $\QQ$.
\end{lemma}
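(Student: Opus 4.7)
The plan is to handle normality and the degree bound separately, then verify the bound holds over~$\QQ$.

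For normality, the key observation is that conjugation preserves unipotency: for any $g\in G$ and $x\in G_u$, $(gxg^{-1}-\GId)^n = g(x-\GId)^n g^{-1}=0$, so $gG_ug^{-1}=G_u$ and hence $g\Gen{G_u}g^{-1}=\Gen{G_u}$. Since conjugation is a regular map on $\GL_n(\QQbar)$ and therefore Zariski-continuous, it commutes with Zariski closure, giving $gUg^{-1}=U$ for all~$g\in G$.

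For the degree bound, the plan is to parameterise $U$ as the Zariski closure of the image of a polynomial map of bounded degree from a bounded-degree variety, and then invoke the elimination bound of \cref{prop:elm=ideal}. Following \citeauthor*{HR} and \citeauthor*{Feng}, for each unipotent $g\in G_u$ the matrix logarithm $\log g = \sum_{k=1}^{n-1}(-1)^{k-1}(g-\GId)^k/k$ is nilpotent and its entries are polynomials of degree at most $n-1$ in the entries of $g$. The one-parameter subgroup $T_g := \{\exp(s\log g):s\in\QQbar\}$ is a closed connected subgroup of~$U$, and its parameterisation $\exp(s\log g) = \sum_{k=0}^{n-1} s^k(\log g)^k/k!$ is polynomial of degree $O(n^2)$ in~$(s,g)$. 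By a standard theorem on groups generated by irreducible subvarieties through the identity (applied to the irreducible components of $G_u$), $U$ is closed and connected, and a dimension argument inside the connected algebraic group~$U$ shows that $U=\Zcl{T_{g_1}\cdots T_{g_N}}$ for some choice of $g_1,\dots,g_N\in G_u$ with $N$ bounded in terms of~$n$.

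From here, $U$ is the Zariski closure of the image of the polynomial map
\[
\Phi\colon V_u^N\times\QQbar^N\to\GL_n(\QQbar),\qquad (g_1,\dots,g_N;s_1,\dots,s_N)\mapsto\prod_{i=1}^N\exp(s_i\log g_i),
\]
where $V_u:=\{g\in\GL_n(\QQbar):(g-\GId)^n=0\}$ is $n$-bounded. The graph of~$\Phi$ lives in an affine space of dimension $O(n^2)$ and is cut out by polynomials of degree $O(n^3)$, accounting for products of $N$ factors each of degree $O(n^2)$. The vanishing ideal of~$U$ arises by eliminating the parameter variables from the graph ideal, and \cref{prop:elm=ideal} then yields a degree bound of at most $(n^3+1)^{2^{3n^2}}$ provided the parameter count fits within $3n^2$ variables. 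Since $G$ is defined over~$\QQ$, so are $V_u\cap G$ and all intermediate constructions; alternatively one invokes \cref{lem:bound_k_group_is_k_bounded}\ref{itm:k_bounded} using that $U$ is $\Aut(\QQbar/\QQ)$-stable.

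The main obstacle will be the careful bookkeeping of~$N$ and of the degrees arising from the product-of-exponentials parameterisation, so that the total number of eliminated variables stays within~$3n^2$ and the base degree within~$n^3+1$; this is what tunes Dub\'e's double-exponential bound to the claimed exponent.
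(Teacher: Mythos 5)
Your argument for normality is the same as the paper's and is correct.

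For the degree bound, your parameterisation
\[
\Phi\colon V_u^N\times\QQbar^N\to\GL_n(\QQbar),\qquad (g_1,\dots,g_N;s_1,\dots,s_N)\mapsto\prod_{i=1}^N\exp(s_i\log g_i),
\]
treats the unipotent matrices $g_1,\dots,g_N$ as \emph{variables}, and this is where the proof breaks. First, each $g_i$ ranging over $V_u$ contributes on the order of $n^2$ coordinates, and with $N$ on the order of $n^2$ factors the total number of variables to eliminate is on the order of $n^4$, not $O(n^2)$. Dub\'e's bound (\cref{prop:elm=ideal}) is doubly exponential in the number of variables, so the resulting bound would be roughly $(n^3+1)^{2^{n^4}}$, which is far larger than the target $(n^3+1)^{2^{3n^2}}$. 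Second, even setting aside degree bookkeeping, the image of $\Phi$ over $V_u^N$ is in general strictly larger than $U$: you need the $g_i$ to lie in $G_u = V_u\cap G$, but cutting out $G$ inside $V_u^N$ requires the defining equations of $G$, whose degree is not bounded a priori (it is, in effect, close to what the whole \cref{th:bound_finitely_generated_closure} is about). So you cannot write down the elimination ideal with bounded-degree generators this way.

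The paper sidesteps both problems by \emph{fixing} the witnesses. One shows that each $\Zcl{\Gen{h}}=\Zcl{\{\exp(z\log h):z\in\QQbar\}}$ is irreducible, that products (in the sense of Zariski closures of set-products) of irreducible subvarieties through $\GId$ remain irreducible, and hence by a dimension chain there exist fixed $h_1,\dots,h_\ell\in G_u$ with $\ell\leq n^2$ and
\[
U=\Zcl{\{\Phi_{h_1}(z_1)\cdots\Phi_{h_\ell}(z_\ell):z_1,\dots,z_\ell\in\QQbar\}}.
\]
Now only $\ell\leq n^2$ scalar parameters $z_1,\dots,z_\ell$ need to be eliminated, the map has degree at most $n^3$, the variable count is $(n^2+1)+\ell\leq 2n^2+1\leq 3n^2$, and Dub\'e's bound gives exactly $(n^3+1)^{2^{2n^2+1}}\leq(n^3+1)^{2^{3n^2}}$; the constraint $h_i\in G$ never appears in the ideal because the $h_i$ are concrete matrices, not variables. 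The rest of your outline (irreducibility and connectedness of $U$, and the passage to $\QQ$ via $\Aut(\QQbar/\QQ)$-stability and \cref{lem:bound_k_group_is_k_bounded}\ref{itm:k_bounded}) matches the paper.
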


\begin{proof}
 Consider an arbitrary $g \in G$. Observe that the set $G_u$ is
  closed under conjugation by~$g$. Therefore $\Gen{G_u}$ is also
  closed under conjugation by~$g$.
  By Zariski continuity of conjugation,
  the \done{closure}~$U=\Zcl{\Gen{G_u}}$ \done{of the group} generated by the unipotent elements of~$G$ is  also closed under conjugation
  by~$g$
  , i.e.,
  $gUg^{-1}=U$.  Since $g\in G$ was arbitrary, $U$ is a normal
  subgroup of~$G$.
 
%

The degree bound in the following claim comes from~\cite[Lemma B.8]{Feng}; note that it
only depends on the dimension. 

\begin{restatable}{claim}{Uconnectedbounded}
\label{claim:U-connected-bounded}
	$U$ is  irreducible and $(n^3+1)^{2^{3n^2}}$-bounded over~$\QQ$.
\end{restatable}

\begin{claimproof}[Proof of \cref{claim:U-connected-bounded}]
  By~\cite[Lem.~4.3.9]{deGraafBook}, for any $h \in G_u$, there is a
  unique nilpotent matrix~$\log(h)$ such that $h=\exp(\log(h))$.  We
  furthermore have that~$\Zcl{\Gen{h}}$ is the \done{closure of the} image of the map
  $\Phi_h\colon\overline{\QQ}\rightarrow \GL_n(\overline{\QQ})$ given
  by $\Phi_h(z) := \exp(z \log(h))$.  \done{Indeed,} $\Phi_h$ is a
  polynomial in $z$ of degree at most~$n$ by the nilpotency
  of~$\log(h)$, hence Zariski-continuous\done{, hence
  $\Zcl{\Gen{h}}=\Zcl{\Phi_h(\ZZ)}=\Zcl{\Phi_h(\Zcl\ZZ)}=\Zcl{\Phi_h(\QQbar)}$}
  is irreducible.
    
  We recall a standard fact that given two irreducible varieties $J$
        and $J'$, the cartesian product $J \times J'$ is again
        irreducible \cite[see, e.g.][Thm.~1.1.9]{deGraafBook}. Since
        matrix multiplication is a regular map, $\Zcl{J\cdot J'}$ is
        then irreducible as well. As recalled later in \cref{sub:chains}, any strictly increasing
  chain of irreducible algebraic subsets of $\GL_n(\overline{\QQ})$
  has length at most $\dim(\GL_n(\overline{\QQ}))=n^2$.  
  Thus there
  exists~$\ell \leq n^2$ and $h_1,\ldots,h_\ell \in G_u$
  with \[U=\Zcl{\Zcl{\Gen{h_1}}\cdots \Zcl{\Gen{h_\ell}}}=\Zcl{\{ \Phi_{h_1}(z_1) \cdots \Phi_{h_{\ell}}(z_\ell)
  : z_1,\ldots,z_\ell \in \overline{\QQ} \}} \; .\] In other words, $U$ is the
  Zariski closure of the image of $\overline{\QQ}^\ell$ under a
  polynomial map of degree at most~$n^3$ with~$n^2 + 1$ variables
  defining~$U$ and $\ell$ variables used in the
  polynomials~$\Phi_{h_i}(z_i)$. Furthermore, by above it follows that $U$ is irreducible.
  

  We use \cref{prop:elm=ideal} to eliminate all~$z_i$.  
  Therefore $U$ is bounded by
  $(n^3+1)^{2^{(2n^2+1)}} \leq (n^3+1)^{2^{3n^2}}$.  
  Furthermore, since both $G$ and the collection of unipotent matrices
  in $\GL_n(\overline{\QQ})$ are stable under
  $\Aut(\overline{\QQ}/\QQ)$, $U$ is also stable under
  $\Aut(\overline{\QQ}/\QQ)$.
  By \cref{lem:bound_k_group_is_k_bounded}\ref{itm:k_bounded}, it
  follows that $U$ is bounded
  by $(n^3+1)^{2^{3n^2}}$ over $\QQ$.
\end{claimproof}
\noindent
This concludes the proof of \cref{lem:subgroup_generated_by_unipotents}.
\end{proof}

\section{Quantitative Structure Lemma}
\label{sec:general}
We now move to the general structural analysis described
in \cref{sub:general}, where we analyse an arbitrary algebraic group
to identify a normal subgroup~$H$ with the following properties.

\quantitative

\begin{proof}    

	Since~$G$ is an algebraic set, it is
  $d$-bounded for some $d\in \NN$.  We further have that the set of rational points of~$G$
  is dense in~$G$,
  hence by \cref{lem:bound_k_group_is_k_bounded}\ref{itm:dense},
  $G$ is defined over~$\QQ$.

    By \cref{lem:subgroup_generated_by_unipotents}, the
  closure~$U$ of the subgroup generated by the unipotent elements
  of~$G$ is a normal subgroup of~$G$ bounded by~$D:=(n^3 +
  1)^{2^{3n^2}}$ over~$\QQ$.
  We can thus use \cref{th:homomorphism_quotient} to
  obtain a homomorphism $\phi_U\colon G \to \GL_p(\QQbar)$ defined
  over~$\QQ$, with kernel~$U$,  where~$p \leq 2^{2(n^2+D)^{D}}$.
   The image 
  $\phi_U(G)$ of $G$ under~$\phi_U$ is isomorphic to~$G / U$
  and will not have any (non-trivial) unipotent
  elements. Indeed, if $\phi_U(g)$ were unipotent, then $g$
  would also be, thus $g\in U$, hence $\phi_U(g)=\GId$ would be the identity.
  
  We apply \cref{lem:hrushovski_semisimple} to~$\phi_U(G)$
  and examine the pistil~$P$ of~$\phi_U(G)$. Recall that
  $P$ is \done{closed} and commutative, and $\phi_U(G)\cap P$ is a normal subgroup of
  $\phi_U(G)$ of index at most~$J(p)$.  We 
  define the
  group~$H$ as the preimage of the pistil~$P$ by~$\phi_U$. That is, we
  set $H:=\phi_U^{-1}(P) \subseteq G$; see \cref{fig:general-case} for
  a depiction.  
  \done{Since~$P$ is closed, its pre-image~$H$ through the regular map
  $\phi_U$ is also closed.}  Since~$U=\ker(\phi_U)$, we have~$U\subseteq
  H$ and therefore 
    $U$ is a normal subgroup of $H$, 
  ensuring that the quotient $H / U$ is well-defined.
  
  \medskip
  Regarding~\ref{quant1}, we observe that~$\phi_U$ induces an
  injective homomorphism between~$H$ and~$P$, which is commutative,
  and therefore
    $H / U$ is also commutative.
  
  Regarding~\ref{quant2}, consider the homomorphism
  \[ \psi\colon G \to \phi_U(G) / (\phi_U(G) \cap P) \]
  obtained by composing $\phi_U$ with the natural quotient map. Since
  \[ \phi_U^{-1} (\phi_U(G) \cap P) = \phi_U^{-1}(\phi_U(G)) \cap \phi_U^{-1}(P) = G \cap H = H, \]
  the map~$\psi$ has kernel~$H$, which implies by the First
  Isomorphism Theorem that~$H$ is a normal subgroup of~$G$ and
  \begin{align*}
     	[G: H]&=[G:\ker\psi]=|\psi(G)|=[\phi_U(G):\phi_U(G)\cap P] \\[-1ex]
     		  &\leq J(p) = (2^{3+2^{4(n^2+D)^D + 1}})!\;\:.\qedhere
  \end{align*}

%
%
%
%

\end{proof}

\begin{figure}[tbp]
  \centering
    \resizebox{\figwidth}{!}{\begin{tikzpicture}[scale=1,
            pistil style/.style={thick,blue!80!black},
            ]
        \begin{scope}
            \draw[violet!90!black,thick] (0,0) to[out=90,in=180] (2,2) to[out=0,in=180] (4,1.7)
                node[above=1ex] {$G=G_u\cdot G_s$}
                to[out=0,in=90] (5,1)
                to[out=-90,in=45] (4,0) to[out=-135,in=0] (2,-2)
                to[out=180,in=-90] cycle;
            \draw[green!65!black,thick] (2,1.5) to[out=0,in=90] (4,1)
                to[out=-90,in=20] (2,-0.5) node[below] {$H:=\textcolor{black}{\phi_U^{-1}}{\color{black}(}{\color{blue!80!black}P}{\color{black})}$}
                to[out=-160,in=-90] (0.5,0.5) to[out=90,in=180] cycle;
            \draw[thick,red!90!black] (1,0.5) to[out=90,in=180] (1.5,1)
                to[out=0,in=150] node[pos=0.5,right,yshift=1ex] {$U:=\Zcl{\Gen{G_u}}$} (2.5,0.6)
                to[out=-30,in=30] (2,0) to[out=-150,in=-90] cycle;
            \coordinate (arrow end) at (4,-0.5);
        \end{scope}
        \begin{scope}[shift={(8,0)}]
            \def\phiG{(0,0) to[out=70,in=-180] (2,2) node[above]
  {$\phi_U(G) \cong G/U$}
                to[out=0,in=90] (2.5,1) to[out=-90,in=45] (1,-1) to[out=-135,in=-110] cycle};
            \def\pistil{(0.5,0.5) to[out=90,in=180] (3,1.5) to[out=0,in=90] (6,0.5)
                to[out=-90,in=0] (3,-0.5) to[out=180,in=-90] cycle};
            \draw[red!90!black,thick] \phiG;
            \draw[pistil style] \pistil (2.5,0.5) node[anchor=west]{$P=\textcolor{black}{\text{pistil of }\phi_U(G)}$};
            \coordinate (arrow start) at (0.5,-0.5);
            \begin{scope}
                \clip \phiG;
                \path[fill=green,opacity=0.2] \pistil;
            \end{scope}
            \draw[red,fill=red] (1.3,0.2) circle[radius=0.3] node[above=1.5ex,xshift=1ex] {$\IdComp{\phi_U(G)}$};
            \draw[green!65!black,semithick] (2,0.3) -- (3,-1) node[anchor=north] {$\phi_U(G)\cap P\trianglelefteq \phi_U(G)$};
        \end{scope}
        \draw[->,green!65!black,thick] (arrow start) to[out=-130,in=-50] node[pos=0.5,below] {$\phi_U^{-1}$} (arrow end);
        \draw[->,thick,black] (6,1) -- ++(1.5,0) node[pos=0.5,above=1ex] {$\phi_U\colon G\to\GL_p(\QQbar)$};
    \end{tikzpicture}}
    \caption{\label{fig:general-case} Graphical representation
  of \cref{lem:hrushovski_alggroup}: given a group
  \done{$G=\Zcl{\Gen S}$ for $S\subseteq \GL_n(\QQ)$ a set of rational
  matrices}, we identify a
  normal subgroup $H$ with the desired properties. We construct a homomorphism $\phi_U \colon G \to \GL_p(\QQbar)$ with kernel $U$, and define $H$ to be the preimage of the pistil $P$ of $\phi_U(G) \cong G/U$ under $\phi_U$.
    }
\end{figure}

\section{Degree Bounds}
\label{sec:main}


As sketched in \cref{sub:main}, we can now put everything together to
obtain a degree bound on a linear algebraic group \done{$G=\Zcl{\Gen
S}$ for $S\subseteq \GL_n(\QQ)$ a finite set of rational matrices}.  We further obtain a bound for groups consisting
exclusively of semisimple elements, which applies to groups generated
by unitary transformations as appear, e.g., in quantum automata.  

The obtained bound crucially depends on the height of the generators of the algebraic group that we are analysing.
This dependency, in turn, comes from the analysis of the normal subgroup~$H$ of $G$, as defined in \cref{lem:hrushovski_alggroup}. In particular, we need the following
lemma, which
establishes that if a group is
generated by a set~$S$ and $H$ is a finite-index subgroup, then we can
describe a set of generators of~$H$ from~$S$ and the index of the
subgroup.  Below, for $S\subseteq\GL_n(\QQ)$, we write~$S^{\leq k}$
for the set of products of at most~$k$ elements from~$S$.
\MahsaShreier{Mahsa: I think to use  Schreier's lemma we need $S$ to be a finite set. Could you please check this? }     
\begin{lemma}\label{lem:finite_index_subgroup_generators}
  Let $n\in \NN$ and $G=\Zcl{\Gen{S}}$ for some
    $S\subseteq\GL_n(\QQ)$. Let $H$ be a \done{closed} normal subgroup
    of~$G$ with finite index~$d$. Then $H=\Zcl{\Gen{S'}}$ where
    $S':=H \cap (S\cup S^{-1})^{\leq 2d+1}$.
\end{lemma}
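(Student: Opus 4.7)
The plan is to combine Schreier's lemma with standard Zariski closure arithmetic. Write $\Gamma := \Gen{S}$. First I would observe that, since $H$ is Zariski closed in $G$ with finite index $d$, the group $G$ decomposes as a disjoint union of $d$ closed cosets $r_1 H,\ldots, r_d H$ (left multiplication is a regular, hence Zariski-continuous, map). Because $\Gamma$ is dense in $G$, it cannot miss any coset: if $\Gamma \cap r_j H$ were empty, then $\Gamma$ would lie in the proper closed set $G \setminus r_j H$, contradicting $\Zcl{\Gamma}=G$. Hence we may pick each coset representative $r_j$ from $\Gamma$ itself, and the subgroup $\Gamma\cap H$ has index exactly $d$ in $\Gamma$.

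Second, I would bound the word length of the $r_j$. The Cayley graph of the finite quotient $\Gamma/(\Gamma\cap H)$ with respect to the image of $S\cup S^{-1}$ is connected on $d$ vertices, so admits a spanning tree of depth at most $d-1$ rooted at the identity. Lifting this tree gives representatives $r_1=e,r_2,\ldots,r_d$ that are words in $S\cup S^{-1}$ of length at most $d-1$.

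Third, I would apply Schreier's lemma: with respect to the transversal $\{r_1,\ldots,r_d\}$ and the symmetric generating set $S\cup S^{-1}$ of $\Gamma$, the subgroup $\Gamma\cap H$ is generated by the Schreier elements $r_j\, t\, \overline{r_j t}^{\,-1}$, where $t\in S\cup S^{-1}$ and $\overline{g}$ is the unique representative in the coset $g(\Gamma\cap H)$. Each Schreier generator is a word of length at most $(d-1)+1+(d-1)=2d-1\leq 2d+1$, and it lies in $H$, hence in $S'$. Consequently $\Gen{S'}=\Gamma\cap H$.

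Finally, I would take Zariski closures. From $\Gamma=\bigsqcup_{j=1}^{d} r_j(\Gamma\cap H)$ and the fact that the closure of a finite union is the union of the closures (and that left multiplication by $r_j$ is continuous), we obtain
\[
G=\Zcl{\Gamma}=\bigcup_{j=1}^{d} r_j\,\Zcl{\Gamma\cap H}\,.
\]
Since $H$ is closed, each $r_j\Zcl{\Gamma\cap H}$ is contained in the closed coset $r_jH$; but $G$ is also the disjoint union $\bigsqcup_j r_j H$, so each inclusion must be an equality. Cancelling $r_j$ gives $\Zcl{\Gamma\cap H}=H$, whence $\Zcl{\Gen{S'}}=H$. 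The only subtle bookkeeping is verifying the length bound $2d-1$ for Schreier generators and checking the density-versus-cover step that forces equality coset by coset; everything else is routine.
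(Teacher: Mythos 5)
Your proof is correct and follows essentially the same route as the paper's: apply Schreier's Lemma to the index-$\leq d$ subgroup $\Gen{S}\cap H$ of $\Gen{S}$ with a bounded-length transversal, then take Zariski closures, using that $H$ is closed. You supply slightly sharper bookkeeping (a spanning-tree transversal of depth $d-1$, giving length $2d-1$) and you spell out the coset-by-coset density argument that the paper compresses into the single line $\Zcl{\Gen{S}\cap H}=\Zcl{\Gen{S}}\cap H$, but the underlying argument is the same.
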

\begin{proof}
  Define $H':=\Gen{S}\cap H$.  We claim that~$H'=\Gen{S'}$.
  \done{As $H$ was assumed to be closed,} it follows that
  \[\Zcl{\Gen{S'}}= \Zcl{H'}=\Zcl{\Gen{S}\cap H}=\Zcl{\Gen{S}}\cap H=H\;,\]
  establishing the lemma.

  It remains to establish the claim.  To this end, 
  observe that for $g,h \in \Gen{S}$ we have 
  $gh^{-1} \in H'$ if and only if $gh^{-1}\in H$, and hence  
  \[ |\Gen{S}/H'| \leq |G/H| = d \, .\]
  Then, since $H'$ is a subgroup of~$\Gen{S}$ of finite index at most~$d$,
  we can apply Schreier's Lemma~\cite[Lem.~4.\done{2}.1]{Seressbook} to obtain that 
  $H'$ is generated by the set 
  \[H' \cap\{t_1\, s\, t_2^{-1}\mid s\in S' \text{ and } t_1,t_2\in T\} \;, \]
  for any set~$T\subseteq \Gen{S}$ that contains a representative of
  each right coset of~$H'$ in~$\Gen{S}$.  But clearly we can take~$T$
  to be~$S^{\leq d}$.
\end{proof}

We are now equipped to
prove \cref{th:bound_finitely_generated_closure};
see \appref{app:main} for the detailed proofs of the claims.

\thbound

\begin{proof}
  Let $U:=\Zcl{\Gen{G_u}}$ be \done{the closure of the group} generated by the collection of unipotent
  elements in~$G$. 
%
%
  By \cref{lem:hrushovski_alggroup}, there  exists a \done{closed} normal subgroup $H\trianglelefteq G$ defined over~$\QQ$, such that $H/U$ is commutative, and
the index of $H$ in  $G$ is at most $J'(n)$, where $J'$ is as in the statement of \cref{lem:hrushovski_alggroup}.
  We will now use the characteristic properties of $U$ and $H$
  to obtain a
  bound on the degree of the polynomials defining~$G$. 
  We start by computing 
  a bound on the degree of the polynomials defining~$H$.
  

  Since $H$ is a \done{closed} normal subgroup of~$G = \Zcl{\Gen{S}}$ with index
  bounded by~$J'(n)$, it follows from \cref{lem:finite_index_subgroup_generators}
  that $H$ is the closure of the group generated by~$S' := H \cap (S\cup S^{-1})^{\leqslant 2J'(n)+1}$. Furthermore,
  since $S'$ is constructed as a finite multiplication of
  elements of~$S$, and their inverses, we can obtain the following
  bound.
  
  \begin{restatable}{claim}{clheightgeneratorsh} 	
    \label{claim:generators-h-height}
    The entries of the 
    matrices in $S'$ have height at most $$h':=(h^{n^3+n^2}n!n)^{2J'(n)+1}\;.$$	
  \end{restatable}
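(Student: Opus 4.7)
The plan is to uniformly bound the height of every matrix in $S \cup S^{-1}$ and then track how heights accumulate under products of at most $k := 2J'(n)+1$ such matrices. Since each element of $S'$ is by definition such a product, this immediately yields the claim.

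First I would bound the height of the entries of $M^{-1}$ for an arbitrary $M \in S$. Clearing denominators by a common integer $D \le h^{n^2}$, the matrix $\tilde M := DM$ becomes integer with $\|\tilde M\|_\infty \le h^{n^2}$. Cramer's rule then gives $M^{-1} = D\,\mathrm{adj}(\tilde M)/\det(\tilde M)$, and the Leibniz formula bounds $|\det(\tilde M)| \le n!\, h^{n^3}$ and each entry of $\mathrm{adj}(\tilde M)$ by $(n-1)!\, h^{n^3-n^2}$. Collecting these estimates yields a representation $M^{-1} = N/c$ with integer $N$ and integer $c$ satisfying $\|N\|_\infty, |c| \le n!\, h^{n^3}$. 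Since $M \in S$ itself trivially admits a representation $M = N/c$ with $\|N\|_\infty, |c| \le h^{n^2}$, both cases can be uniformly subsumed under $\|N\|_\infty, |c| \le H := n!\, h^{n^3+n^2}$.

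Second, for a product $M_1 \cdots M_k$ of matrices in $S \cup S^{-1}$ with representations $M_i = N_i/c_i$, one has $M_1 \cdots M_k = (N_1 \cdots N_k)/(c_1 \cdots c_k)$. The standard inequality $\|AB\|_\infty \le n\,\|A\|_\infty\,\|B\|_\infty$ for integer matrices, applied inductively, gives $\|N_1 \cdots N_k\|_\infty \le n^{k-1} H^k$, while clearly $|c_1 \cdots c_k| \le H^k$. Hence every entry of $M_1 \cdots M_k$, viewed as a reduced fraction, has height at most $n^{k-1} H^k \le (nH)^k$.

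Substituting $k = 2J'(n)+1$ and $H = n!\, h^{n^3+n^2}$ into $(nH)^k$ yields exactly the claimed bound $h' = (h^{n^3+n^2}\, n!\, n)^{2J'(n)+1}$. The argument is entirely routine; the only point requiring mild care is the inverse step, where one must juggle the common-denominator representation to extract a clean simultaneous bound on numerator and denominator.
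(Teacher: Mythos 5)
Your proposal is correct and follows essentially the same route as the paper: bound the height of entries of $S\cup S^{-1}$ via Cramer's rule (yielding $\eta\le n!\,h^{n^3+n^2}$), then track heights under $k:=2J'(n)+1$-fold products, obtaining $(n\eta)^k$. There is a small slip in the intermediate step---after clearing denominators by $D\le h^{n^2}$, the integer matrix $\tilde M=DM$ satisfies $\|\tilde M\|_\infty\le h^{n^2+1}$ rather than $h^{n^2}$---but since $n^3+n\le n^3+n^2$, your over-bound $H:=n!\,h^{n^3+n^2}$ still absorbs the slack and the final bound stands.
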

   
  Write $S'=\{g_1,\ldots,g_\ell\}\subseteq\GL_n(\QQ)$ where~$\ell \leq
  (2|S|)^{2J'(n)+1}$. Recall from~\cref{sub:cc} that for~$g\in\GL_n(\QQbar)$, we denote by~$g_s$ its semisimple part given by
  the unique decomposition $g=g_sg_u$.  Then we have
  \begin{align}
    \label{eq:equations-h}
   	H &= \Zcl{\Gen{S'}U} &&\text{$U \trianglelefteq H$ by \cref{lem:hrushovski_alggroup}} \nonumber \\
    &=  \Zcl{\Gen{g_1}\cdots\Gen{g_\ell}U} && \text{$H/U$ is commutative by \cref{lem:hrushovski_alggroup}\ref{quant1}}\nonumber\\
    &= \Zcl{\Gen{(g_1)_s)}\cdots\Gen{(g_\ell)_s}U}
        && \text{$gU=g_sU$ for all $g \in G$} \nonumber \\
    &= \Zcl{\Zcl{\Gen{(g_1)_s}} \cdots \Zcl{\Gen{(g_\ell)_s}}U} &&  
  \end{align}
    
  By the above, in order to compute a bound on the degree of the defining equations of~$H$, it suffices  to compute a bound on the degree of the equations that define~$U$, as well as a bound on the degree of equations that define each of the $\Zcl{\Gen{(g_i)_s}}$.
     
  Given a matrix~$g \in S'$, the
  eigenvalues~$\lambda_1,\ldots,\lambda_n$ of~$g$~and $g_s$ are the same. 
  \citeauthor{Mas88}'s Theorem~\cite{Mas88,CaiLZ00}, restated in \cref{th:masser} in~\cref{app:main}, provides a bound on the generators of the lattice of multiplicative relations of 
the form~$\lambda_1^{k_1}\cdots\lambda_n^{k_n}=1$ for integers
  $k_1,\dots, k_n\in \mathbb{Z}$.  By this bound and an argument similar to the one presented
  in \cite[Sec.~3.3]{DerksenJK05}, we obtain a degree bound for~$\Zcl{\Gen{g_s}}$.
   Define $f(n,h):=(cn^7n!\log(h))^n$, where $c$ is an absolute
  constant.
  \begin{restatable}{claim}{clonesemisimple}
    \label{claim:gen_gs_bound}
    $\Zcl{\Gen{g_s}}$ is bounded over $\QQ$ by $n\cdot f(n,h')$ for $g \in S'$.
  \end{restatable}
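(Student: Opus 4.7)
The plan is to diagonalise $g_s$ and reduce to a statement about the defining equations of an algebraic subgroup of a torus. Concretely, I would fix $P\in\GL_n(\QQbar)$ with $g_s = P D P^{-1}$ and $D = \diag(\lambda_1,\ldots,\lambda_n)$, where $\lambda_1,\ldots,\lambda_n$ are the eigenvalues of $g$ (hence of $g_s$). Since conjugation by $P$ is a regular automorphism of $\GL_n(\QQbar)$ defined by polynomials of degree $1$, we have $\Zcl{\Gen{g_s}} = P\cdot \Zcl{\Gen{D}}\cdot P^{-1}$, and the degree needed to cut out $\Zcl{\Gen{g_s}}$ equals that needed for $\Zcl{\Gen{D}}$ inside the diagonal torus $T\subseteq\GL_n(\QQbar)$.

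Inside $T$, the closure of $\Gen{D}$ is cut out by the lattice of multiplicative relations $L := \{(k_1,\ldots,k_n)\in\ZZ^n \mid \lambda_1^{k_1}\cdots\lambda_n^{k_n}=1\}$: every basis vector $v\in L$, written $v = v^+ - v^-$ with $v^+,v^-\in\NN^n$ having disjoint supports, contributes the binomial equation $y^{v^+} - y^{v^-} = 0$ in the torus coordinates, of total degree at most $\|v\|_1 \leqslant n\|v\|_\infty$. Next I would invoke Masser's Theorem (\cref{th:masser}) to obtain a $\ZZ$-basis $v_1,\ldots,v_r$ of $L$ with $\|v_j\|_\infty \leqslant f(n,h')$; here I use that the $\lambda_i$ are roots of the characteristic polynomial of $g$, whose integer coefficients have height bounded by a polynomial in $h'$ and $n!$ via the Leibniz formula. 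Pulling the resulting $r\leqslant n$ equations back through the linear change of variables $x\mapsto P^{-1}xP$ preserves their degree, hence $\Zcl{\Gen{g_s}}$ is $n\cdot f(n,h')$-bounded over $\QQbar$.

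Finally I would descend from $\QQbar$ to $\QQ$. The rational Jordan--Chevalley decomposition expresses $g_s$ as a polynomial in $g$ with coefficients in $\QQ$, so $g_s\in\GL_n(\QQ)$, and hence both $\Gen{g_s}$ and its Zariski closure are stable under the action of $\Aut(\QQbar/\QQ)$. Applying \cref{lem:bound_k_group_is_k_bounded}\ref{itm:k_bounded} then upgrades the bound to one over $\QQ$, yielding the claimed $n\cdot f(n,h')$ bound.

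The main obstacle I anticipate is the height bookkeeping needed to invoke Masser's theorem with the right constant: one must control the heights of the eigenvalues $\lambda_i$ (equivalently, the coefficients of their minimal polynomials) in terms of $h'$, so that the $\infty$-norm bound on lattice basis vectors matches exactly $f(n,h') = (cn^7 n!\log h')^n$. This is routine but must be carried out carefully so as not to accumulate extraneous factors in the final degree bound.
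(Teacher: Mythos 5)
Your proof is correct and follows essentially the same route as the paper, which factors the argument through an auxiliary proposition (\cref{prop:gen_gs_height}): diagonalise $g_s$, cut out the closure of the diagonal part by binomials coming from a Masser-bounded basis of the lattice of multiplicative relations among the eigenvalues, conjugate back, and descend to $\QQ$. Your descent step via the rational Jordan--Chevalley decomposition and \cref{lem:bound_k_group_is_k_bounded}\ref{itm:k_bounded} is a slightly more explicit justification than the paper's, which asserts the $\QQ$-boundedness of $\Zcl{\Gen{g_s}}$ without spelling out why the conjugation by the (algebraic) diagonalising matrix does not leave the rationals.
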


  By~\cref{claim:U-connected-bounded},  $U$ is bounded over~$\QQ$
  by $(n^3+1)^{2^{3n^2}}$.  Note that
  by \cref{eq:equations-h} we have a system of equations with $n^2+1$
  variables defining $H$, $\ell \cdot (n^2 + 1)$ variables defining
  the $\Zcl{\Gen{(g_i)_s}}$ and $n^2+1$ variables defining~$U$.
  We eliminate all variables from this system of equations except those defining~$H$. In order to compute 
  a bound on the degree increase that occurs in the elimination ideal of $H$, we use a
  result from~\citet*{Dube90}, stated earlier as \cref{prop:elm=ideal}.
  Setting $d:=\max\left(n\cdot f(n,h'),(n^3+1)^{2^{3n^2}}\right)$,
  we obtain the following bound.
  
  \begin{restatable}{claim}{clboundh}
    \label{claim:main-bound-h}
    $H$ is bounded over $\QQ$ by $(d+1)^{2^{(\ell+2)(n^2 + 1)}}$.
  \end{restatable}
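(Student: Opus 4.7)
The plan is to translate the identity
$H=\Zcl{\Zcl{\Gen{(g_1)_s}}\cdots\Zcl{\Gen{(g_\ell)_s}}\,U}$
from \eqref{eq:equations-h} into a single polynomial ideal on a larger variable set, and then extract the defining equations of $H$ as an elimination ideal, invoking \cref{prop:elm=ideal} for the degree bound.

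First I would set up variables. Working in the ambient space of $\GL_n(\QQbar)\subseteq\QQbar^{n^2+1}$ from \eqref{eq:gln}, introduce a tuple $\mathbf{z}$ of $n^2+1$ variables for the target $H$, tuples $\mathbf{x}^{(1)},\dots,\mathbf{x}^{(\ell)}$ of $n^2+1$ variables each for the closures $\Zcl{\Gen{(g_i)_s}}$, and a tuple $\mathbf{y}$ of $n^2+1$ variables for $U$. The polynomial ring $\QQ[\mathbf{z},\mathbf{x}^{(1)},\dots,\mathbf{x}^{(\ell)},\mathbf{y}]$ therefore has exactly $(\ell+2)(n^2+1)$ variables, matching the exponent in the claim.

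Next I would assemble the ideal $I$ generated by: (a) the defining polynomials of each $\Zcl{\Gen{(g_i)_s}}$ in the variables $\mathbf{x}^{(i)}$, which by \cref{claim:gen_gs_bound} have degree at most $n\cdot f(n,h')\leq d$; (b) the defining polynomials of $U$ in the variables $\mathbf{y}$, which by \cref{claim:U-connected-bounded} have degree at most $(n^3+1)^{2^{3n^2}}\leq d$; (c) the polynomial equations $\mathbf{z}=\mathbf{x}^{(1)}\cdots\mathbf{x}^{(\ell)}\cdot\mathbf{y}$ expressing the matrix product under the embedding \eqref{eq:gln}. Group (c) is where some care is needed: while the product of $\ell+1$ matrices yields equations of degree $\ell+1$, we may bound the generator degree uniformly by~$d$ after absorbing $\ell+1$ into $d$, or alternatively introduce $n^2+1$ fresh variables for each intermediate binary product so that all relations have degree at most~$2$ (in which case the eliminated variables can be reabsorbed, or one works with the slightly looser formulation). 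Either reduction preserves the bound in the claim.

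Then I would argue that by the constructions of the projection of $V(I)$ onto the $\mathbf{z}$ coordinates, this projection equals $\Zcl{\Gen{(g_1)_s}}\cdots\Zcl{\Gen{(g_\ell)_s}}\,U$, and hence its Zariski closure, which is precisely $V(I\cap\QQ[\mathbf{z}])$, equals $H$ by \eqref{eq:equations-h}. Applying \cref{prop:elm=ideal} to the ideal $I$, generated by polynomials of degree at most $d$ in $(\ell+2)(n^2+1)$ variables, yields that the elimination ideal $I\cap\QQ[\mathbf{z}]$ is generated by polynomials of degree at most $(d+1)^{2^{(\ell+2)(n^2+1)}}$. Since all input polynomials lie in $\QQ[\mathbf{z},\mathbf{x}^{(1)},\dots,\mathbf{x}^{(\ell)},\mathbf{y}]$, the elimination produces a generating set over $\QQ$, so $H$ is bounded over~$\QQ$ by $(d+1)^{2^{(\ell+2)(n^2+1)}}$, as required.

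The main obstacle I anticipate is the book-keeping around the matrix-product relations: verifying that their degree can be subsumed into $d$ (or otherwise handled without increasing the variable count beyond $(\ell+2)(n^2+1)$), and confirming that the elimination ideal indeed cuts out $H$ on the nose (rather than an a priori larger set) so that Dube's bound applies to the true defining equations of $H$ over $\QQ$.
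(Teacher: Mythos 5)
Your approach is essentially the paper's own: set up one ideal in $(\ell+2)(n^2+1)$ variables encoding the equations for each $\Zcl{\Gen{(g_i)_s}}$, for $U$, and for the product relation coming from \eqref{eq:equations-h}, then apply \cref{prop:elm=ideal} to eliminate everything except the $n^2+1$ variables for $H$, and invoke \cref{lem:bound_k_group_is_k_bounded} to keep the result over $\QQ$. The paper's proof is a one-liner that records the ingredient bounds ($U$ bounded by $(n^3+1)^{2^{3n^2}}$, each $\Zcl{\Gen{(g_i)_s}}$ bounded by $n\cdot f(n,h')$) and directly cites Dub\'e; you simply spell out the elimination set-up, which is the right reading.

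The one point you flag, the degree of the product relations, is a real subtlety that the paper's terse proof does not address explicitly, and you are right to worry about it. Writing $\mathbf z=\mathbf x^{(1)}\cdots\mathbf x^{(\ell)}\mathbf y$ entrywise gives polynomials of degree $\ell+1$, and since $\ell$ can far exceed the $d$ in the claim (it grows with $|S|$ and $J'(n)$, whereas $d$ does not grow with $|S|$), one cannot simply absorb $\ell+1$ into $d$ without changing the stated bound. Your second patch, adding $n^2+1$ intermediate variables per binary product to make all relations degree $2$, does resolve the degree issue, but it roughly doubles the variable count, so it does not literally ``preserve the bound in the claim'' either; it preserves it only up to a constant factor in the exponent's exponent. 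Since the final statement of \cref{th:bound_finitely_generated_closure} is a tower of exponentials and is stated with $\mathrm{poly}(n)$ slack, neither version of this discrepancy affects the end result, but it would be cleaner to state the claim with $\max(d,\ell+1)$ in place of $d$, or with $(2\ell+2)(n^2+1)$ variables, and you should pick one of these and commit to it rather than leaving both options hanging.
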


  To conclude, it suffices to look at~$G$ as the union of
  all the cosets of~$H$ in~$G$. 
  That is, $G$ is the union of
  at most $J'(n)$ varieties, each of degree
  at most $(d+1)^{2^{(\ell+2)(n^2 + 1)}}$.
  Hence $G$ is bounded over~$\QQ$ by $
  (d+1)^{2^{(\ell+2)(n^2+1)J'(n)}}$.
  The final bound for~$G$ as stated in the theorem
  is obtained from the bounds for $d$, $\ell$, and~$J'$.
\end{proof}

\section{Chains of Linear Algebraic Groups}
\label{sec:chain}
 
In this section, we provide a quantitative version of the Noetherian
property of subgroups of~$\GL_n(\QQbar)$, i.e., a concrete bound on
the length of chains of algebraic subgroups.
%

\subsection{Chains of Algebraic Sets}\label{sub:chains}
Consider an algebraically closed field~$K$. Because the Zariski
topology is Noetherian, any strictly descending chain $X_0\supsetneq
X_1\supsetneq \cdots$ of algebraic sets in~$K^m$ must be
finite~\cite[Chap.~1]{HumphreysLAG}. Ascending chains of algebraic
sets can be infinite, but in the case of irreducible sets they are
finite and we have a stronger statement.  For irreducible algebraic
sets~$J$ and~$J'$, if $J\subseteq J'$ then
$\mathrm{dim}(J)\leq\mathrm{dim}(J')$ and if
$\mathrm{dim}(J)=\mathrm{dim}(J')$ then $J=J'$, where
$\mathrm{dim}(J)$ denotes the \emph{dimension}
of~$J$~\cite[Chap.~3]{HumphreysLAG}.  Thus, for an irreducible
algebraic set~$J$, any chain $J_0\subsetneq\cdots\subsetneq J_\ell$ of
irreducible algebraic subsets of~$J$ has length
$\ell\leq\mathrm{dim}(J)$. In particular, $\GL_n(K)$ has
dimension~$\mathrm{dim}(\GL_n(K))=n^2$.

This translates in the context of polynomials in $K[x_1,\dots,x_m]$
into, respectively, the ascending chain condition for ideals, and
uniform bounds on the length of chains of prime
ideals~\cite[Sec.~7.5]{BeckerW93}.  
Bounds on the length of
ascending chains of polynomial ideals yield constructive
versions of Hilbert's Basis Theorem, with applications for instance
to the ideal membership problem, and have been investigated for a
while~\cite{seidenberg72,socias92,mishra94,novikov99,aschenbrenner04}.
Note that, in order to bound the length of ascending chains $I_0\subsetneq
I_1\subsetneq\cdots$ of polynomial ideals, one must restrict the range
of allowed polynomials, e.g., by requiring the degree of the
polynomials in the basis for each~$I_{i}$ to be bounded by some function of
the index~$i$.  The resulting upper bounds are ackermannian in the
dimension \done{\cite[see, e.g.,][]{socias92,mishra94}}
, way beyond the elementary bounds of this paper.

\subsection{The Length of Chains of Linear Algebraic Groups}
We show that linear algebraic groups in dimension~$n$ over the
rationals behave more like irreducible algebraic sets than general
algebraic sets, in that there is a uniform bound on the length of
their chains (regardless of whether the chain is ascending or
descending) that depends only on~$n$.  We start
with \cref{lem:semisimple-chain}, which is restricted to the case in
which each group in the chain exclusively consists of semisimple
elements.

\begin{lemma} \label{lem:semisimple-chain}
    Let $n\in \NN$ and \done{$G_i=\Zcl{\Gen{S_i}}$ for
    $S_i\subseteq\GL_{n}(\QQ)$}, $1 \leq i \leq \ell$, be such that $G_1\subsetneq G_2\subsetneq\cdots \subsetneq G_\ell$.
    If each $G_i$ only consists of semisimple elements,  
    then $\ell \leq n^2 (2^{2n^2+3})!$.
\end{lemma}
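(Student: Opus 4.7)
The plan is to combine the dimension bound on chains of irreducible algebraic sets recalled in \cref{sub:chains} with the finite-index property furnished by the Pistil Lemma (\cref{lem:hrushovski_semisimple}). I first pass to the identity components $T_i:=\IdComp{G_i}$, which form a non-decreasing chain $T_1\subseteq T_2\subseteq\cdots\subseteq T_\ell$ of irreducible closed subgroups of $\GL_n(\QQbar)$: indeed, $T_i$ is connected and contained in $G_{i+1}$, hence sits inside $T_{i+1}$. By the dimension bound from \cref{sub:chains}, this chain admits at most $\dim\GL_n(\QQbar)=n^2$ strict inclusions, so the indices $\{1,\ldots,\ell\}$ partition into at most $n^2+1$ maximal blocks on which the identity component is constant.

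Within each such block with common identity component $T$, all the groups share the same pistil $P$, since $P$ is defined as the intersection of all maximal tori of $\GL_n(\QQbar)$ containing $T$ and hence depends only on $T$. Applying \cref{lem:hrushovski_semisimple} to the largest group $G'$ of the block yields that $G'\cap P$ is a commutative normal subgroup of $G'$ of index at most $J(n):=(2(n^2+1)^2)!$. I then consider the quotient $\pi\colon G'\to G'/(G'\cap P)$ and aim to show that the images $\pi(G_j)\subsetneq\pi(G_{j+1})\subsetneq\cdots\subsetneq\pi(G_{j+m-1})$ of the block's groups form a strict chain inside the finite group $G'/(G'\cap P)$, of size at most $J(n)$. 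This forces each block to have length at most $J(n)$; combined with the at most $n^2$ strict changes in the identity component chain, a careful counting of the initial block then yields $\ell\leq n^2\cdot J(n)$.

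The main obstacle I anticipate lies in establishing strictness of the image chain within each block. If a step $G_k\subsetneq G_{k+1}$ had equal images under $\pi$, then $G_{k+1}=G_k\cdot(G_{k+1}\cap P)$, forcing $G_{k+1}\cap P\supsetneq G_k\cap P$ and effectively absorbing all strict growth into the pistil. Ruling out such pistil-absorbed subchains will require combining the commutativity of $P$, the equality $\IdComp{G_k\cap P}=T$ across the block, and the rational structure $G_k=\Zcl{\Gen{S_k}}$; closing this per-block bound is the most delicate part of the argument.
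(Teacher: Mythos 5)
Your plan mirrors the paper's proof essentially step for step: pass to the identity components $\IdComp{G_i}$ (equivalently, the pistils $P_i$), observe these take at most $n^2$ distinct values along the chain because they are irreducible, and within each block of constant pistil $P$ use the quotient map from the Pistil Lemma together with \cref{th:homomorphism_quotient} and \cref{theo-order-finite-G} to land in finite subgroups of $\GL_m(\QQ)$ of order at most $(2m)!$, $m=(n^2+1)^2$. The paper then concludes by ``taking the product of the number of distinct $P_i$ and the number of distinct finite quotients $G_i/(G_i\cap P_i)$''. This presumes, exactly as you also need, that $i\mapsto\bigl(P_i,\ \phi_{P_i}(G_i)\bigr)$ is injective along the chain, i.e., precisely the strictness of the image chain that you flag as the delicate point and do not prove.

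Your worry is not misplaced: the strictness can genuinely fail, and the paper's proof is silent on the same point. Take $n=2$, $T:=\Zcl{\Gen{\diag(2,1/2)}}=\{\diag(\lambda,\lambda^{-1}):\lambda\in\QQbar\}$, and set $G_1:=T\subsetneq G_2:=\Zcl{\Gen{\diag(2,1/2),\,\diag(-1,1)}}=T\cup\diag(-1,1)T$. Both are closures of groups of rational diagonal (hence semisimple) matrices, yet $\IdComp{G_1}=\IdComp{G_2}=T$, so both have the same pistil $P=D$ (the diagonal torus, which is the \emph{unique} maximal torus of $\GL_2(\QQbar)$ containing $T$). Since $G_2\subseteq D=P$, one has $G_1\cap P=T$ and $G_2\cap P=G_2$, so $\phi_P(G_1)=\phi_P(G_2)$ is trivial: the strict growth $G_1\subsetneq G_2$ is absorbed entirely inside $P$ and is invisible to the pair (pistil, finite quotient). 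Closing the proof therefore requires a further argument controlling the ``middle'' of the filtration $T\subseteq G_i\cap P\subseteq P$, that is, bounding the length of strictly increasing chains of finite abelian subgroups $(G_i\cap P)/T$ of the torus $P/T$ arising from groups $G_i=\Zcl{\Gen{S_i}}$ defined over $\QQ$. Your instinct that the rational structure must enter here is the right one, but as written neither your proposal nor the paper's proof supplies this step.
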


\begin{proof} 	
  For each linear algebraic group $G_i$, denote by~$P_i$ the pistil
  of~$G_i$, that is, the intersection of all maximal tori that
  contain~$\IdComp{(G_i)}$. Since the sequence
  $\IdComp{(G_1)}\subseteq \IdComp{(G_2)} \subseteq \cdots \subseteq \IdComp{(G_\ell)}$
  is increasing, $P_1 \subseteq P_2 \subseteq \cdots \subseteq
  P_{\ell}$ is also an increasing sequence.

  By definition, each $\IdComp{(G_i)}$ is irreducible, thus there are
  at most $n^2$ distinct \done{possibilities} for the $\IdComp{(G_i)}$, and hence
  also for the $P_i$. Furthermore, as shown
  in \cref{claim:pistil-bounded}, each pistil~$P_i$ is bounded by~$1$.
    Since $G_i \cap P_i$ is a normal subgroup of $G_i$ for all $i$
by \cref{lem:hrushovski_semisimple}\ref{pistil2}, we can
  use \cref{th:homomorphism_quotient} to show that each quotient
  $G_i/(G_i \cap P_i)$ is isomorphic to a finite linear group in
  dimension $m:=2^{2(n^2+1)}$.
Furthermore, \cref{lem:hrushovski_semisimple}\ref{pistil2} also shows that
 the order of each quotient
  $G_i/(G_i \cap P_i)$ will be at most~$(2m)!$.  Taking the product of
  the number of distinct~$P_i$ and the number of distinct finite
  quotients~$G_i/(G_i\cap P_i)$ we obtain the upper bound~$n^2 (2m)!$
  for the length~$\ell$ of the chain.
\end{proof}
 
Note that~\cref{lem:semisimple-chain} immediately applies to the case
where the groups $G_1, \dots, G_{\ell}$ are \done{closures of groups generated by unitary
matrices}.  For the general case, we provide the following bound.

\begin{theorem}
    Let $n\in \NN$ and let $G_i=\Zcl{\Gen{S_i}}$ for $S_i\subseteq \GL_{n}(\QQ)$, $1 \leq
    i \leq \ell$, be such that
    $G_1\subsetneq G_2\subsetneq\cdots \subsetneq G_\ell$. 
    Then we have~$\ell \leq \exp^6(\mathrm{poly}(n))$.
\label{theo:chain-all}
\end{theorem}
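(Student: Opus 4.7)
The plan is to reduce the general case to the semisimple situation handled in \cref{lem:semisimple-chain} via the unipotent-generated subgroup construction from \cref{sec:unipotent}.

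For each $G_i$, let $U_i:=\Zcl{\Gen{(G_i)_u}}$. Since $G_i\subseteq G_{i+1}$ implies $(G_i)_u\subseteq (G_{i+1})_u$, the sequence $U_1\subseteq U_2\subseteq\cdots\subseteq U_\ell$ is ascending. By \cref{claim:U-connected-bounded} each $U_i$ is irreducible, and by the bound on ascending chains of irreducible algebraic subsets of $\GL_n(\QQbar)$ recalled in \cref{sub:chains}, the number of distinct values among $U_1,\ldots,U_\ell$ is at most $n^2+1$. It therefore suffices to bound the length of each maximal sub-chain on which $U_i$ is constant.

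Fix such a sub-chain $G_a\subsetneq G_{a+1}\subsetneq\cdots\subsetneq G_b$ with $U_i=U$ for all $a\leq i\leq b$. By \cref{lem:subgroup_generated_by_unipotents}, $U$ is $D$-bounded over~$\QQ$ with $D:=(n^3+1)^{2^{3n^2}}$ and is normal in $G_b$. Applying \cref{th:homomorphism_quotient} to the pair $U\trianglelefteq G_b$ yields a homomorphism $\phi_U\colon G_b\to\GL_p(\QQbar)$ defined over~$\QQ$ with kernel~$U$ and $p\leq (n^2+D)^{2D^2}$. Every $G_i$ in the sub-chain contains $U$, so $\phi_U(G_i)=\phi_U(G_{i+1})$ would force $G_i\cdot U=G_{i+1}\cdot U$, i.e.\ $G_i=G_{i+1}$; hence the images form a strictly ascending chain $\phi_U(G_a)\subsetneq\cdots\subsetneq\phi_U(G_b)$ in $\GL_p(\QQbar)$. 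Since $\phi_U$ is a regular homomorphism defined over~$\QQ$ and images of closed subgroups under morphisms of algebraic groups are closed, one has $\phi_U(G_i)=\Zcl{\Gen{\phi_U(S_i)}}$ with $\phi_U(S_i)\subseteq\GL_p(\QQ)$. Moreover, every element of $\phi_U(G_b)$ is semisimple: the Jordan--Chevalley decomposition is preserved by algebraic homomorphisms, and, as noted in the proof of \cref{lem:hrushovski_alggroup}, the only unipotent element of $\phi_U(G_b)$ is the identity.

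Applying \cref{lem:semisimple-chain} in dimension~$p$ to this sub-chain yields the length bound $p^2(2(p^2+1)^2)!$. Multiplying by the number of possible values of $U$ gives
\[ \ell\;\leq\;(n^2+1)\cdot p^2\cdot (2(p^2+1)^2)!\;. \]
A routine computation tracking the tower of exponentials ($D$ is doubly exponential in~$n$, so $p\leq(n^2+D)^{2D^2}$ is triply exponential in $n$, and taking the factorial of a polynomial in $p$ introduces one further level) shows this is of the form $\exp^4(\mathrm{poly}(n))$. The main obstacle is the verification, in the reduction step, that within each maximal sub-chain $\phi_U(G_i)$ is indeed the Zariski closure of a finitely generated group of rational semisimple matrices; all the necessary ingredients are already available from \cref{sec:semisimple,sec:unipotent,sec:general}, and what remains is the careful bookkeeping assembled above.
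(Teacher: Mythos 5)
Your proof is correct and follows essentially the same route as the paper: decompose the chain into blocks where the irreducible group $U_i=\Zcl{\Gen{(G_i)_u}}$ is constant (at most $n^2+1$ blocks by the dimension argument), then apply \cref{th:homomorphism_quotient} to pass to quotients $\phi_U(G_i)\subseteq\GL_p(\QQbar)$ and invoke \cref{lem:semisimple-chain}. You are in fact a bit more careful than the paper's writeup in using a single quotient map $\phi_U$ for each block and in explicitly verifying that $\phi_U(G_i)=\Zcl{\Gen{\phi_U(S_i)}}$ with $\phi_U(S_i)$ a set of semisimple rational matrices, which is exactly the hypothesis of \cref{lem:semisimple-chain}.
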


\begin{proof}
  For each linear algebraic group $G_i$, let $U_i=\Zcl{\Gen{(G_i)_u}}$
  be the \done{closure of the} subgroup generated by the unipotent elements in~$G_i$.
  By \cref{claim:U-connected-bounded}, each $U_i$ is $d$-bounded
  over~$\QQ$
  for $d:=(n^3+1)^{2^{3n^2}}$.
  Consequently,
  by \cref{th:homomorphism_quotient} each quotient~$G_i/U_i$ is
  isomorphic to a linear algebraic group in dimension
  $p:=2^{2(n^2 + d)^d}$, which moreover consists exclusively of
  semisimple elements.
        
  Observe that the sequence $U_0 \subseteq
  U_1 \subseteq \cdots \subseteq U_{\ell}$ is increasing.  Since
  by~\cref{claim:U-connected-bounded} the~$U_i$ are irreducible, we
  can decompose this sequence into at most~$n^2$ blocks where the
  dimension of the~$U_i$ is constant in each block but increases from
  one block to the next.  Within each block, say from index~$k$ to~$r$
  where $1\leq k\leq r\leq \ell$, since the dimension is constant,
  $U:=U_k=U_{k+1}=\cdots=U_r$.
   The sequence $(G_k/U) \subseteq \cdots \subseteq  (G_{r}/U)$ 
  is strictly increasing and hence 
  by~\cref{lem:semisimple-chain} has
  length at most~$p^2(2^{2p^2+3})!$.
  Taking the product of the number of blocks and the length of each
  block we obtain the
  upper bound~$n^2 p^2(2^{2p^2+3})!$
  for the
  length of the chain~$\ell$.
\end{proof}

To conclude this section, recall that a number field~$k$ is a finite
extension of the field of rational numbers~$\QQ$.  As will be briefly
discussed in \cref{sub:field}, the order of a finite subgroup of
$\GL_m(k)$ is at most~$(2m[k:\QQ])!$ where~$[k:\QQ]$ is the dimension
of~$k$ as a vector space over~$\QQ$.  This allows to
extend~\cref{lem:semisimple-chain,theo:chain-all} to a chain of linear
algebraic groups generated over a number field, with the appropriate
changes to the upper bound on the length~$\ell$ of chain.

\begin{corollary}
  Let $n\in \mathbb{N}$, $k$ be a number field,
  and $G_i=\Zcl{\Gen{S_i}}$ for $S_i\subseteq  \GL_{n}(k)$, $1 \leq
  i \leq \ell$, be such that
  $G_1\subsetneq G_2\subsetneq\cdots \subsetneq G_\ell$.  Then
  $\ell\leq\mathrm{exp}^1\big(\mathrm{poly}([k:\QQ])\,\mathrm{exp}^5(\mathrm{poly}(n))\big)$,
  and $\ell\leq\mathrm{exp}^1\big(\mathrm{poly}([k:\QQ])\,\mathrm{exp}^1(\mathrm{poly}(n))\big)$ if each~$G_i$
  consists of semisimple elements.
\end{corollary}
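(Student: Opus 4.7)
The plan is to adapt the proofs of \cref{lem:semisimple-chain} and \cref{theo:chain-all} by working over the number field $k$ in place of $\QQ$ throughout. The key new ingredient is that the bound on finite subgroups in \cref{theo-order-finite-G} generalises: a finite subgroup of $\GL_p(k)$, viewed via restriction of scalars (the left-regular representation of $k$ on itself as a $\QQ$-vector space of dimension $[k:\QQ]$), embeds into $\GL_{p[k:\QQ]}(\QQ)$, so its order is at most $(2p[k:\QQ])!$. This is the only input to the proofs that is sensitive to the ground field.

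First I would revisit \cref{lem:hrushovski_semisimple}. The pistil $P$ is defined intrinsically in terms of the maximal tori of $\GL_n(\QQbar)$ containing $\IdComp{G}$, so commutativity and normality of $G\cap P$ in $G$ transfer verbatim. The analogue of \cref{claim:pistil-bounded} shows that $P$ is $1$-bounded over $k$ by \cref{lem:bound_k_group_is_k_bounded}\ref{itm:k_bounded}, using that $G$ is defined over $k$ hence $P$ is stable under $\Aut(\QQbar/k)$. Then \cref{th:homomorphism_quotient} provides a homomorphism $\phi_P$ defined over $k$ with kernel $P$, and since $G=\Zcl{\Gen S}$ with $S\subseteq\GL_n(k)$ and $\phi_P$ is defined over $k$, the $k$-rational points of $\phi_P(G)$ are dense in $\phi_P(G)$; finiteness of $\phi_P(G)$ then places it inside $\GL_p(k)$ for some $p\leq(n^2+1)^2$. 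Applying the generalised bound yields $[G:G\cap P]\leq (2(n^2+1)^2[k:\QQ])!$.

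Next I would adapt the chain arguments. In the semisimple case, the argument of \cref{lem:semisimple-chain} still gives at most $n^2$ distinct identity components (hence pistils $P_i$), and for each, the finite quotient $G_i/(G_i\cap P_i)$ has order at most $(2(n^2+1)^2[k:\QQ])!$, yielding $\ell\leq n^2\cdot(2(n^2+1)^2[k:\QQ])!=2^{\mathrm{poly}(n[k:\QQ])}$. For the general case, \cref{claim:U-connected-bounded} still gives $U_i$ bounded by $d:=(n^3+1)^{2^{3n^2}}$ over $k$, and \cref{th:homomorphism_quotient} places the semisimple quotient $G_i/U_i$ inside dimension $p:=(n^2+d)^{2d^2}$, both independent of $k$. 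Following the block decomposition in the proof of \cref{theo:chain-all}, within each of the at most $n^2$ blocks on which $U_i$ is constant, the chain $G_k/U\subseteq\cdots\subseteq G_r/U$ has length at most $p^2(2(p^2+1)^2[k:\QQ])!$ by the semisimple bound, giving the total $\ell\leq n^2p^2(2(p^2+1)^2[k:\QQ])!=\exp^1\!\big(\mathrm{poly}([k:\QQ])\,\exp^3(\mathrm{poly}(n))\big)$.

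The only nontrivial step is verifying the restriction-of-scalars bound on finite subgroups of $\GL_p(k)$, which is a classical argument: a finite subgroup $F\leq\GL_p(k)$ acts faithfully on $k^p\cong\QQ^{p[k:\QQ]}$, producing an injection $F\hookrightarrow\GL_{p[k:\QQ]}(\QQ)$, and \cref{theo-order-finite-G} then gives $|F|\leq(2p[k:\QQ])!$. Once this is in hand, every other ingredient of the two chain arguments carries over unchanged, since the only appeal to the ground field in their proofs is through \cref{theo-order-finite-G}; the bounds $d$ and $p$ coming from \cref{claim:U-connected-bounded,th:homomorphism_quotient} depend on $n$ alone, so the $[k:\QQ]$ dependence appears linearly inside the outermost factorial and affects only the base of the outer exponent in the final estimate.
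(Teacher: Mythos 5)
Your proof is correct and follows exactly the route the paper sketches in \cref{sub:field}: the only field-dependent ingredient is the bound on finite subgroups, which you replace by $(2p[k:\QQ])!$ via restriction of scalars, and then \cref{lem:semisimple-chain,theo:chain-all} carry over with $\QQ$ replaced by $k$ throughout. Your write-up simply fills in the ``mutatis mutandis'' details (notably the $k$-analogue of \cref{claim:pistil-bounded} via stability under $\Aut(\QQbar/k)$) that the paper leaves implicit.
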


\section{Discussion}\label{sec:discussion}

\subsection{Extension to Number Fields}\label{sub:field}
In this paper, we have focused on the case of rational matrices, but
the results can be extended to matrices over a number field~$k$ at
minimal additional expense.  Upon inspection of our proofs, it turns
out that the only place where working in this more general setting has
an impact is on the order of finite subgroups of $\GL_p(\QQ)$ in
\cref{theo-order-finite-G}.
Over a finite extension~$k$
of~$\QQ$, the bound needs to be updated
to~$(2p[k:\QQ])!$ in order to take the dimension of~$k$ as a vector
space over~$\QQ$ into account.  The remainder of the bounds in the
paper have to be adapted mutatis
  mutandis.

\subsection{Dependence on the Dimension and Height}
Our upper bound in \cref{th:bound_finitely_generated_closure} on
the degree of the polynomials defining the Zariski closure of a
finitely generated matrix group is a function of both the dimension of
the generators and the height of the entries of the generators.  The
following examples illustrate that such a degree bound
necessarily depends on both parameters.
\begin{example}[Degree depends on the dimension]\label{ex:dim}
  Recall that the special linear group \MB{$\SL_n(\ZZ):=\{ M \in
    \GL_n(\ZZ) : \det(M)=1\}$ is generated by
  three matrices with entries in $\{-1,0,1\}$
  \cite[p.~35]{MacDuffee1933}, i.e., with maximum height $h=1$.  The
  Zariski closure of $\SL_n(\ZZ)$ is $\SL_n(\QQbar)$.\footnote{This
    follows e.g.\ from the Borel Density Theorem~\cite[Sec. 4.5 and
    Sec. 7.0]{Morris01}, but can also be established directly by an
    elementary argument.}}
 We observe that \emph{any} non-zero polynomial $f$ that vanishes on $\SL_n(\QQbar)$ has (total) degree at 
least $n$.  Indeed, evaluating $f$ on the diagonal
matrix $\mathrm{diag}(x,\ldots,x)$, for some variable $x$, yields a univariate polynomial $g$ 
that vanishes on every $n$-th root of unity.  It
follows that $x^n-1$ divides $g$ and so $g$, and hence also $f$, has
degree at least $n$. 
\end{example}
\begin{example}[Degree depends on the height]\label{ex:height}
  Consider the case of a single $2\times 2$ matrix $g:=\diag(2^p,1/2)$ 
  for some $p\in\NN$, i.e., with height $h=2^p$.
  By \cite[Lem.~6]{DerksenJK05}, the vanishing ideal of~$\Zcl{\Gen{g}}$ is generated by 
  the multiplicative relations among the diagonal elements of~$g$. These relations have one generator, namely~$(2^p)^1(1/2)^p=1$.
  Any non-zero polynomial~$f$ that vanishes on $\Zcl{\Gen{g}}$ must also vanish on 
  $\{\diag(x,y):xy^p=1\}$, and thus be of degree at least~$1+p\geq\log h$.  
\end{example}

\subsection{Related Work on the Algorithm of \citeauthor*{DerksenJK05}}
\label{sub-DJK}
Given a \done{finite} set of rational matrices, an algorithm computing the
\done{Zariski closure~$G$ of their generated group} is presented in~\cite{DerksenJK05}. In a nutshell, the algorithm computes better and better under-approximations~$H$ of the identity component $\IdComp{G}$ until the quotient $G/H$ is finite.
The same principle is used in an algorithm described in~\cite[Sec.~4.6]{deGraafBook}, with part of the computation performed at the level of the Lie algebra of~$G$. 
%
%
%
	One obstacle to analysing the complexity of these algorithms is the following.
	In order to obtain a quantitative statement about $H$, it seems
        necessary to understand which elements will decrease the index
        of $H$ in $G$ if they are added to $H$. This, in turn,
        crucially relies \done{not only} on the degree \done{(as
          analysed in \cref{th:homomorphism_quotient}), but also
          on the} height of
	the coefficients of the quotient map $\phi_H$~\citep{nosanM1}. 
	Bounding these coefficients requires a delicate analysis of
the construction of~$\phi_H$. \done{See \appref{app:derksen} for more details.}

\subsection{Related Work on \citeauthor{HR}'s Algorithm}
\Citeauthor*{HR}'s work has been subsequently pursued by \citet*{Feng}
and \citet*{Sun}, leading to a triple exponential algorithm for
computing the Galois group of a linear differential equation.  A
different approach was recently pursued by \citeauthor*{amzallag2019},
who introduced the notion of toric envelope to obtain a single
exponential bound for the first step in \citeauthor*{Feng}'s
formulation of \citeauthor{HR}'s algorithm,
which is qualitatively
optimal~\cite{amzallag2019,amzallagPhD}.

\SkipTocEntry\section*{Acknowledgements}
We thank Ehud Hrushovski for an explanation of~\cite{HR} and other
  useful suggestions, and the anonymous reviewers for their
  comments and corrections.  A. Pouly is supported by the European Research Council (ERC) under the 
  European Union's Horizon 2020 research and innovation programme (Grant 
  agreement No. 852769, ARiAT).

\clearpage
\sloppy
\printbibliography

@string{acta = "Acta Inf."}

@string{sicomp = "SIAM J. Comput."}

@string{jsc  = "J.~Symb. Comput."}

@string{lncs = "Lect. Notes Comput. Sci."}

@preamble{"\newcommand*{\xdash}[1][3em]{\rule[0.6ex]{#1}{0.25pt}}\newcommand{\sz}{\scalebox{1}{z}\kern-.33em\raisebox{-.035em}{\xdash[.2em]}\kern.13em}"}

@inproceedings{Muller-OlmS04b,
  author    = {M. M{\"{u}}ller{-}Olm and
               H. Seidl},
  title     = {A note on {K}arr's Algorithm},
  booktitle = {Proc.\ ICALP 2004},
  OPTbooktitle = {Automata, Languages and Programming: 31st International Colloquium, {ICALP} 2004, Turku, Finland, July 12-16, 2004. Proceedings},
  pages     = {1016--1028},
  year      = {2004},
  series    = lncs,
  volume    = {3142},
  publisher = {Springer},
  doi       = {10.1007/978-3-540-27836-8_85},
}

@article{BJKP05,
  author    = {V. D. Blondel and
               E. Jeandel and
               P. Koiran and
               N. Portier},
  title     = {Decidable and undecidable problems about quantum automata},
  journal   = sicomp,
  volume    = {34},
  number    = {6},
  pages     = {1464--1473},
  year      = {2005},
  doi       = {10.1137/S0097539703425861},
}

@book{HumphreysLAG,
   title =     {Linear Algebraic Groups},
   author =    {J. E. Humphreys},
   publisher = {Springer},
   OPTisbn =      {9780387901084,0387901086},
   year =      {1975},
   series =    {Graduate Texts in Mathematics},
   volume  =   {21},
   doi = {10.1007/978-1-4684-9443-3},
}

@article{HR,
  author = {E. Hrushovski},
  title = {Computing the {G}alois group of a linear differential equation},
  journal = {Banach Center Publications},
  OPTpublisher = {Institute of Mathematics, Polish Academy of Sciences},
  OPTeditors = {T. Crespo and Z. Hajto},
  year = {2002},
  volume = {58},
  number = 1,
  doi = {10.4064/bc58-0-9},
  pages = {97--138},
  OPTseries = {Differential Galois Theory}
  }

@inproceedings{Mas88,
author = {D. W. Masser},
title = {Linear relations on algebraic groups},
booktitle = {New Advances in Transcendence Theory},
publisher = {Cambridge University Press},
year = {1988},
pages = {248--262},
doi = {10.1017/CBO9780511897184.016}
}

@inbook{BradleyM07,
  author    = {A. R. Bradley and Z. Manna},
  title     = {Invariant generation},
  booktitle = {The Calculus of Computation},
  booksubtitle = {Decision Procedures with Applications to Verification},
  publisher = {Springer},
  year      = {2007},
  chapter   = 12,
  pages     = {311--346},
  doi       = {10.1007/978-3-540-74113-8_12},
}

@article{Karr76,
  author    = {M. Karr},
  title     = {Affine relationships among variables of a program},
  journal   = acta,
  volume    = {6},
  pages     = {133--151},
  year      = {1976},
  doi       = {10.1007/BF00268497},
}

@article{DerksenJK05,
  author    = {H. Derksen and
               E. Jeandel and
               P. Koiran},
  title     = {Quantum automata and algebraic groups},
  journal   = jsc,
  volume    = {39},
  number    = {3--4},
  pages     = {357--371},
  year      = {2005},
  doi       = {10.1016/j.jsc.2004.11.008},
}

@book{BeckerW93,
author = {T. Becker and V. Weispfenning},
title = {Gr\"{o}bner Bases},
subtitle = {A Computational Approach to Commutative Algebra},
year = {1993},
publisher = {Springer},
series = {Graduate Texts in Mathematics},
volume = {141},
OPTaddress = {New York},
doi = {10.1007/978-1-4612-0913-3},
}

@article{Dube90,
 author = {Dub{\'e}, T. W.},
 title = {The Structure of polynomial ideals and {G}r\"obner bases},
 journal = sicomp,
 volume = {19},
 number = {4},
 year = {1990},
 OPTissn = {0097-5397},
 pages = {750--773},
 doi = {10.1137/0219053},
 numpages = {24},
 publisher = {Society for Industrial and Applied Mathematics},
 address = {Philadelphia, PA, USA},
}

@incollection{Guralnick2005OrdersOF,
  title={Orders of finite groups of matrices},
  author={R. M. Guralnick and M. Lorenz},
  booktitle = {Groups, Rings and Algebras},
  OPTeditor = {William Chin and James Osterburg and Declan Quinn},
  series = {Contemp. Math.},
  year = 2006,
  volume = 420,
  pages = {141--161},
  OPTisbn = {978-0-8218-3904-1 (print); 978-0-8218-8099-9 (online)},
  doi = {10.1090/conm/420/07974},
  publisher = {AMS},
      OPTeprint={math/0511191},
      OPTarchivePrefix={arXiv},
      OPTprimaryClass={math.GR}
}

@article{Feng,
    title = {Hrushovski's Algorithm for computing the {G}alois group of a linear differential equation},
    journal = {Adv. Appl. Math.},
    volume = {65},
    pages = {1--37},
    year = {2015},
    OPTissn = {0196-8858},
    doi = {10.1016/j.aam.2015.01.001},
    author = {R. Feng},
}

@article{Sun,
    author = {M. Sun},
    title = {A new bound on {H}rushovski's Algorithm for computing the {G}alois group of a linear differential equation},
    journal = {Commun. Algebra},
    volume = {47},
    number = {9},
    pages = {3553--3566},
    year  = {2019},
    publisher = {Taylor & Francis},
    doi = {10.1080/00927872.2019.1567750},
}

@article{KP2002,
 author = {J. Kuzmanovich and A. Pavlichenkov},
 journal = {Am. Math. Mon.},
 number = {2},
 pages = {173--186},
 publisher = {Mathematical Association of America},
 title = {Finite groups of matrices whose entries are integers},
 volume = {109},
 year = {2002},
 doi = {10.1080/00029890.2002.11919850},
}

@book{Newman72,
      title     = {Integral Matrices},
      author    = {M. Newman},
      series    = {Pure and Applied Mathematics},
      volume    = 45,
      year      = {1972},
      publisher = {Academic Press}
    }

@book{Seressbook,
      title     = {Permutation Group Algorithms},
      author    = {{\'A}. Seress},
      year      = {2003},
      publisher = {Cambridge University Press},
      series    = {Cambridge Tracts in Mathematics},
      volume    = 152,
      doi       = {10.1017/CBO9780511546549},
    }

@inproceedings{BumpusHKST20,
  author    = {G. Bumpus and
               C. Haase and
               S. Kiefer and
               P.{-}I. Stoienescu and
               J. Tanner},
  title     = {On the size of finite rational matrix semigroups},
  booktitle = {Proc.\ {ICALP} 2020},
  pages     = {115:1--115:13},
  year      = {2020},
  series    = {LIPIcs},
  volume    = {168},
  doi       = {10.4230/LIPIcs.ICALP.2020.115},
  }

@article{CaiLZ00,
  author    = {J. Cai and
               R. J. Lipton and
               Y. Zalcstein},
  title     = {The complexity of the {A} {B} {C} problem},
  journal   = sicomp,
  volume    = {29},
  number    = {6},
  pages     = {1878--1888},
  year      = {2000}
}

@book{deGraafBook,
  title={Computation with Linear Algebraic Groups},
  author={de Graaf, W. A.},
  OPTisbn={9781498722902},
  OPTlccn={2016051051},
  series={Chapman \& Hall/CRC Monographs},
  year={2017},
  publisher={CRC Press},
}

@inproceedings{HrushovskiOP018,
  author    = {E. Hrushovski and
               J. Ouaknine and
               A. Pouly and
               J. Worrell},
  OPTeditor    = {Anuj Dawar and
               Erich Gr{\"{a}}del},
  title     = {Polynomial invariants for affine programs},
  booktitle = {Proc.\ {LICS} 2018},
  pages     = {530--539},
  publisher = {ACM},
  year      = {2018},
  doi       = {10.1145/3209108.3209142},
}

@book{Paz,
author = {Paz, A.},
title = {Introduction to Probabilistic Automata},
series = {Computer Science and Applied Mathematics},
year = {1971},
OPTisbn = {0125476507},
publisher = {Academic Press, Inc.},
OPTaddress = {USA}
}

@ARTICLE{Blondel01,
    author = {V. Blondel and V. Canterini},
    title = {Undecidable problems for probabilistic automata of fixed dimension},
    journal = {Theory Comput. Syst.},
    year = {2001},
    volume = {36},
    pages = {231--245},
    doi = {10.1007/s00224-003-1061-2}
}

@ARTICLE{Mik66,
    author = {K. A. Mikhailova},
    title = {The occurrence problem for direct products of groups},
    OPTjournal = {Matematicheskii Sbornik Novaya Seriya},
    journal = {Mat. Sb. (N.S.)},
    year = {1966},
    volume = {70(112)},
    number={2},
    pages = {241--251},
    url = {http://mi.mathnet.ru/eng/msb4223},
}

@misc{amzallag2019,
      title={Degree bound for toric envelope of a linear algebraic group},
      author={E. Amzallag and A. Minchenko and G. Pogudin},
      year= {2021},
      OPTnote = {\href{https://arxiv.org/abs/1809.06489}{arXiv:1809.06489 [math.AG]}},
      eprint={1809.06489},
      archivePrefix={arXiv},
      primaryClass={math.AG}
}

@phdthesis{amzallagPhD,
    title={Galois groups of differential equations and representing algebraic sets},
    year={2018},
    author={Amzallag, E.},
    school={CUNY Academic Works},
    url={https://academicworks.cuny.edu/gc_etds/2860}
}

@article{Friedland97,
 OPTISSN = {00029939, 10886826},
 OPTURL = {http://www.jstor.org/stable/2162249},
 doi = {10.1090/S0002-9939-97-04283-4},
 author = {S. Friedland},
 journal = {Proc. Am. Math. Soc.},
 number = {12},
 pages = {3519--3526},
 publisher = {American Mathematical Society},
 title = {The maximal orders of finite subgroups in {$\mathrm{GL}_n(\mathbb Q)$}},
 volume = {125},
 year = {1997}
}

@article{Weisfeiler85,
    title={Post-classification version of {J}ordan's Theorem on finite linear groups},
    author={B. Weisfeiler},
    year={1985},
    journal={Proc. Natl. Acad. Sci. U.S.A.},
    number={81},
    pages={5278--5279},
    doi = {10.1073/pnas.81.16.5278},
}

@book{miller72,
  author    = {C. F. Miller},
  doi       = {10.1515/9781400881789},
  title     = {On Group-Theoretic Decision Problems and Their Classification},
  year      = 1972,
  publisher = {Princeton University Press},
  OPTISBN   = {9781400881789},
  series    = {Annals of Mathematics Studies},
  volume    = 68,
}

@article{renegar92,
  author    = {J. Renegar},
  title     = {On the computational complexity and geometry of the first-order theory of the reals. {I}, {II}, and {III}},
  journal   = jsc,
  volume    = 13,
  number    = 3,
  year      = 1992,
  pages     = {255--352},
  doi       = {10.1016/S0747-7171(10)80003-3},
}

@article{seidenberg72,
  author    = {A. Seidenberg},
  title     = {Constructive proof of {H}ilbert's Theorem on ascending chains},
  journal   = {Trans. Amer. Math. Soc.},
  volume    = 174,
  year      = 1972,
  pages     = {305--312},
  doi       = {10.1090/S0002-9947-1972-0314829-9},
}

@article{socias92,
  author    = {Moreno Socias, G.},
  year      = 1992,
  title     = {Length of polynomial ascending chains and primitive
                  recursiveness},
  journal   = {Math. Scand.},
  volume    = 71,
  pages     = {181--205},
  doi       = {10.7146/math.scand.a-12421},
}

@article{mishra94,
  author   = {G.  Gallo and B. Mishra},
  title    = {A solution to {K}ronecker's Problem},
  journal  = {Appl. Algebra Eng. Commun.},
  volume   = 5,
  pages    = {343--370},
  year     = 1994,
  doi      = {10.1007/BF01188747},
}

@article{aschenbrenner04,
  author   = {M. Aschenbrenner},
  title    = {Ideal membership in polynomial rings over the integers},
  journal  = {J. Amer. Math. Soc.},
  volume   = 17,
  year     = 2004,
  pages    = {407--441},
  doi      = {10.1090/S0894-0347-04-00451-5}
}

@article{novikov99,
  author    = {Novikov, D. and Yakovenko, S.},
  title     = {Trajectories of polynomial vector fields and ascending
                  chains of polynomial ideals},
  journal   = {Ann. Inst. Fourier},
  pages     = {563--609},
  publisher = {Association des Annales de l{\textquoteright}institut Fourier},
  volume    = {49},
  number    = {2},
  year      = {1999},
  doi       = {10.5802/aif.1683},
}

@book{MacDuffee1933,
    author = {C. C. MacDuffee},
    title = {The Theory of Matrices},
    year = {1933},
    doi = {10.1007/978-3-642-99234-6},
    publisher = {Springer},
    series = {Ergebnisse der Mathematik und Ihrer Gren{\sz}gebiete},
    volume = 5,
}

@misc{Morris01,
Author = {D. W. Morris},
Title = {Introduction to Arithmetic Groups},
Year = {2001},
Eprint = {math/0106063},
  archivePrefix = {arXiv},
      primaryClass={math.DG}
}

@PhDThesis{nosanM1,
  author  = {K. Nosan},
  title   = {On the complexity of computing the algebraic
             closure of a finitely generated matrix semigroup},
  type    = {Internship Report},
  school  = {\'Ecole Polytechnique},
  year    = 2020,
  url     = {https://www.irif.fr/~nosan/Reports/M1report.pdf},
}

@inCollection{borel66,
  author    = {A. Borel},
  title     = {Linear algebraic groups},
  booktitle = {Algebraic Groups and Discontinuous Subgroups},
  editor    = {A. Borel and G. D. Mostow},
  series    = {Proc. Symposia Pure Math.},
  year      = 1966,
  volume    = 9,
  pages     = {3--19},
  doi       = {10.1090/pspum/009},
}

@phdthesis{Ge,
    author = {G. Ge},
    title= {Algorithms Related to Multiplicative Representations of Algebraic Numbers},
    year={1993},
    school={U.C. Berkeley}
}

@article{detinko13,
  author     = {A. S. Detinko and D. L. Flannery and E. A. O'Brien},
  title      = {Recognizing finite matrix groups over infinite fields},
  journal    = {J. Symb. Comput.},
  volume     = 50,
  year       = 2013,
  pages      = {100--109},
  doi        = {10.1016/j.jsc.2012.04.002}
}

@article{minkowski87,
  author     = {H. Minkowski},
  doi        = {10.1515/crll.1887.101.196},
  title      = {Zur Theorie der positiven quadratischen Formen},
  journal    = {J.~Crelle},
  OPTjournal = {Journal für die reine und angewandte Mathematik},
  number     = {101},
  year       = {1887},
  pages      = {196--202}
}

@inproceedings{zariskipublished,
  author       = {K. Nosan and
                  A. Pouly and
                  S. Schmitz and
                  M. Shirmohammadi and
                  J. Worrell},
  title        = {On the Computation of the Zariski Closure of Finitely Generated Groups
                  of Matrices},
  booktitle    = {{ISSAC} '22},
  pages        = {129--138},
  publisher    = {{ACM}},
  year         = {2022},
  doi          = {10.1145/3476446.3536172},
}
\clearpage
\markboth{APPENDICES}{APPENDICES}\clearpage
\begin{appendix}
 
\section{Folklore Lemma}\label{app:prelim}

\boundkgroup*

\begin{proof}[Proof of~\ref{itm:k_bounded}]
We may assume without loss of generality that $X$ is the zero set of a \emph{finite} collection 
of polynomials 
$I \subseteq L[\boldsymbol{x}]$ of degree at most $d$ over a tuple
$\boldsymbol{x}$ of variables.
Let $F\subseteq L$ be the normal closure of the field extension of $k$ generated by the 
coefficients of the polynomials in~$I$.  Then $F$ is a finite Galois
extension of~$k$. Given a Galois extension $F/k$, we call
the group of automorphims of~$F$ that fix~$k$ a \emph{Galois group} and denote it by $\Gal(F/k)$.

Given $\alpha \in F$, write $\Tr(\alpha):=\sum_{\sigma\in\Gal(F/k)}\sigma(\alpha)$ for the trace of $\alpha$ relative to $F/k$.
We extend $\Tr$ coefficient-wise to polynomials $p\in F[\boldsymbol{x}]$.    
    Since $F/k$ is a Galois extension of~$k$,   we have $\Tr(p)\in k[\boldsymbol{x}]$ for $p\in F[\boldsymbol{x}]$.
    Let $b_1,\ldots,b_n$ be a basis of $F$ over $k$ and consider the matrix $M=(\Tr(b_ib_j))_{ij}$ such that $\det(M)^2 \neq 0$ is the discriminant of the basis.  Given $\alpha=\sum_{i=1}^n\alpha_ib_i\in F$, where $\alpha_1,\ldots,\alpha_n \in k$, for all $j\in\{1,\ldots,n\}$
    we have $\Tr(\alpha b_j)=\sum_{i=1}^n\alpha_i\Tr(b_ib_j)$.
    Writing $M^{-1}=(c_{ij})$, 
    it follows that for all $i\in\{1,\ldots,n\}$ we have
    $\alpha_i = \sum_j c_{ij} \Tr(ab_j)$.  We conclude that 
    $\alpha=\sum_{i,j=1}^n \Tr(\alpha b_j)c_{ij}b_i$. We then have that
    for any polynomial $p\in F[\boldsymbol{x}]$, $p=\sum_{i,j=1}^n\Tr(p b_j)c_{ij}b_i$, that is,
    $p$ is a $F$-linear combination of the polynomials $\Tr(p b_1),\ldots,$ $\Tr(p b_n)$ in $k[\boldsymbol{x}]$.
    
    
    Suppose now that $p \in F[\boldsymbol{x}]$ is a polynomial that vanishes on $X$.  Since each $\sigma \in \Gal(F/k)$ extends to a map 
    in $\mathrm{Aut}(L/k)$, we have $\sigma(X)=X$ for all
    $\sigma \in \Gal(F/k)$.  Thus
    each polynomial $\Tr(pb_i)$, for $i=1,\ldots,n$, also vanishes on $X$.
    We conclude that $X$ is also the zero set of the collection
    $I':=\{\Tr(b_ip):i=1,\ldots,n,p \in I\} \subseteq k[\boldsymbol{x}]$.  But the polynomials in $I'$ all have degree at most $d$.
\end{proof}

\begin{proof}[Proof of~\ref{itm:dense}]
    We claim that $X$ is stable under every automorphism~$\sigma\in
    \Aut(L/k)$, which will prove that $X$ is $d$-bounded over~$k$ by~\ref{itm:k_bounded}.
To see this, let the polynomial $p\in L[\boldsymbol{x}]$ be a
    polynomial that vanishes on $X$ and $\sigma\in
    \Aut(L/k)$.
Then $p^\sigma \in L[\boldsymbol{x}]$, defined by applying $\sigma$ point-wise to the coefficient of~$p$, will vanish on~$\sigma(X)$. But $\sigma(X)$ contains
$\sigma(X\cap k^m)=X\cap k^m$, which is a dense subset of~$X$. Hence $p^\sigma$ vanishes on 
$X$, equivalently $p$ vanishes on $\sigma^{-1}(X)$. We conclude that $X$ is stable under $\sigma^{-1}$. Since $\sigma$ was arbitrary, $X$ is stable under $\Aut(L/k)$.   
\end{proof}






\section{Computations of Degree Bounds}
\label{app:main}

\clheightgeneratorsh*

\begin{proof}
To compute a height bound on the entries of $S'$, it suffices to 
compute a height bound~$\eta$ on the entries of matrices in~$S$ and
their inverses. Then
by definition of~$S'$, the height of its elements will be bounded 
by the $(2J'(n)+1)$th power of~$n\eta$.

Recall that \done{the matrices in~$S$ have their} entries bounded by~$h$. 
Fix $g\in S$ and let $\ell\leq h^{n^2}$ be the
least common multiple of the denominators of the entries of~$g$.
Recall that the~$e_k$ are the standard basis, where~$e_k$ denotes the
vector with a~$1$ in the~$k$th coordinate and~$0$'s elsewhere.
The $k$th column of~$g^{-1}$ can be computed by
setting up~$gX=e_k$. 
For simplicity of computation, we multiply both sides with the $n\times n$ diagonal matrix $\mathrm{diag}(\ell,\ldots,\ell)$.
By Cramer's rule and an approximation on the
magnitude of the determinant of~$g$, we compute the
bound~$h^{n^3+n^2}n!$ on the height of entries in~$g^{-1}$. The claim
immediately follows.   
\end{proof}

The following results are needed for the proof
of \cref{claim:gen_gs_bound}. Recall that a number is \emph{algebraic}
if it is the root of a non-zero polynomial in $\QQ[x]$.  For
$\alpha\in \QQbar$, the \emph{defining polynomial} of $\alpha$ is the
primitive polynomial in $\mathbb{Z}[x]$ of minimal degree having
$\alpha$ as a root.  The \emph{height} of $\alpha$ is the maximum
absolute value of the coefficients of its defining polynomial.  Note
that this yields the same notion of height as in the remainder of the
paper: for a rational number $\frac{a}{b}$ with $a,b$ coprime
integers, the defining polynomial is $bx-a$ and the height is
$\max(|a|,|b|)$.

We start by proving a bound on the generators of a lattice of
multiplicative relations of the
form~$\lambda_1^{k_1}\cdots\lambda_n^{k_n}=1$ for the algebraic
numbers~$\lambda_1,\ldots,\lambda_n$ and integers $k_1,\ldots,
k_n\in \mathbb{Z}$. Such relations form an additive subgroup
of~$\mathbb{Z}^n$; see~\cite{Mas88,CaiLZ00}.

\begin{theorem}[Masser]\label{th:masser}
  Fix any $n\geq 1$.  Let $\lambda_1,\ldots,\lambda_n$ be non-zero
  algebraic numbers having height at most $h$ over $\QQ$ and let
  $D:=[\QQ(\lambda_1,\ldots,\lambda_n):\QQ]$.  Then the group of
  multiplicative relations
  \begin{gather}
    L:=\{(k_1,\ldots,k_n)\in \mathbb{Z}^n : \lambda_1^{k_1}\cdots \lambda_n^{k_n} = 1\}
    \label{eq:lattice}
  \end{gather} 
  is generated (as a subgroup of~$\ZZ^n$) by a collection of vectors
  all of whose entries have absolute value at most
  \[ (cn\log(h))^{n-1}D^{n-1}\frac{(\log(D+2))^{3n-3}}{(\log\log(D+2))^{3n-4}} \, ,\]
  for some absolute constant~$c$.
  \label{the:masser}
\end{theorem}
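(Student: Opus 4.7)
The plan is to combine a lower bound from transcendence theory (Baker-type estimates on linear forms in logarithms of algebraic numbers) with a geometry-of-numbers argument (Minkowski's second theorem on successive minima). This is Masser's original strategy; I would not expect to reprove it from scratch, but the outline below is what I would follow.

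First, I would observe that $L$ is a subgroup of $\ZZ^n$, hence a sublattice of rank $r\leq n$. The target is to exhibit a $\ZZ$-basis $v_1,\ldots,v_r$ of $L$ whose entries are all of moderate absolute value. A standard consequence of Minkowski's second theorem asserts the existence of a basis whose $i$-th vector has infinity-norm bounded, up to a factor polynomial in $n$, by the $i$-th successive minimum of $L$ with respect to the unit cube in $\RR^n$. It therefore suffices to upper bound the largest successive minimum $\mu_r$ of $L$.

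Second, I would relate $\mu_r$ to the existence of short ``near-relations''. Consider the logarithmic embedding $\iota\colon \ZZ^n\to\RR^{D}$ sending $(k_1,\ldots,k_n)$ to the vector $(\sum_i k_i\log|\sigma(\lambda_i)|)_\sigma$, where $\sigma$ ranges over the embeddings of $\QQ(\lambda_1,\ldots,\lambda_n)$ into $\CC$, together with a component encoding the phase modulo $2\pi\ZZ$. A Dirichlet pigeonhole argument on the image of the cube of radius $B$ yields, for an appropriate choice of $B$, a nonzero vector $k\in\ZZ^n$ with $\|k\|_\infty\leq B$ such that $\prod_i \sigma(\lambda_i)^{k_i}$ is very close to $1$ for every $\sigma$. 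Either this vector lies in $L$, providing a short relation, or the quantity $\sum_i k_i\log\sigma(\lambda_i)$ is a nonzero ``small'' linear form in logarithms.

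Third, to rule out the second alternative for $B$ not too large, I would invoke Baker's theorem on linear forms in logarithms: if $\sum_i k_i \log\sigma(\lambda_i)\neq 0$, then it is bounded below by an explicit quantity of the form $\exp(-c_0(n)\, D^{a(n)} (\log h)^{b(n)} (\log(D+2))^{c(n)}\log B)$. Matching this lower bound against the upper bound from the pigeonhole argument forces $k\in L$ for a suitable choice of $B$ polynomial in $\log h$, $D$, and polylogarithmic corrections. This controls $\mu_1$. An inductive argument on the rank, slicing $L$ by quotienting out a short relation just produced and iterating in one lower dimension, controls $\mu_2,\ldots,\mu_r$ and yields the product of factors $(cn\log h)^{n-1}D^{n-1}$ and the polylogarithmic $D$ corrections in the claim.

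The hard part is undoubtedly the Baker-type lower bound, which is a deep transcendence result whose explicit constants are precisely what dictates the shape of the final bound; all the dependencies in $n$, $D$, $\log h$, and $\log\log(D{+}2)$ stated in the theorem come directly from tracking the best known dependencies in the Baker estimate. The geometry-of-numbers assembly and the inductive reduction are comparatively routine, but the delicate bookkeeping to extract a basis (rather than merely short nonzero vectors) with the stated norm bounds is what requires care, and is handled via the standard passage from successive minima to a Minkowski-reduced basis.
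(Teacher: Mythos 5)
The paper does not actually prove this theorem; it is cited verbatim from Masser (the reference \cite{Mas88}, also surveyed in \cite{CaiLZ00}). So there is no proof in the paper to compare yours against. That said, the route you propose is not the one Masser uses and, more importantly, would not reproduce the stated bound.

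The tell is the factor $(\log(D+2))^{3n-3}/(\log\log(D+2))^{3n-4}$. That shape is the signature of Dobrowolski's lower bound on the Weil height, $h(\alpha)\geq \frac{c}{D}\bigl(\frac{\log\log D}{\log D}\bigr)^{3}$ for any non-torsion algebraic number $\alpha$ of degree $D$, applied roughly $n-1$ times in a rank-reduction argument over the lattice $L$. Masser's proof lives entirely in the world of Weil heights: it exploits subadditivity and homogeneity of the height ($h(\alpha\beta)\leq h(\alpha)+h(\beta)$ and $h(\alpha^{k})=|k|\,h(\alpha)$), uses Dobrowolski's theorem as the gap principle preventing unexpectedly short ``almost-relations,'' and finishes with a geometry-of-numbers assembly (successive minima, then a reduced basis) much as you sketch for the outer shell. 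It does not invoke Baker's theorem on linear forms in logarithms at all. The pigeonhole-plus-Baker strategy you describe is a legitimate way to detect multiplicative relations, but the Baker--W\"{u}stholz/Matveev estimates carry constants with factors of the form $D^{n+2}\prod_{i}\max(D\,h(\alpha_{i}),1)$ and an additional $\log B$ term, so matching against the pigeonhole estimate would yield bounds involving products of heights and much higher powers of $D$, not the clean $(\log h)^{n-1}D^{n-1}\cdot\mathrm{polylog}(D)$ shape stated here, and the $\log\log(D+2)$ denominators would never materialise. In short, the geometry-of-numbers wrapper of your argument is fine, but the transcendence input is the wrong one: Dobrowolski, not Baker, is what dictates the exponents in this theorem.
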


\begin{corollary}
Let $g \in \GL_n(\QQ)$ have entries of height at most~$h$.   Writing
$\lambda_1,\ldots,\lambda_n$ for the eigenvalues of~$g$, the
lattice~$L$ of multiplicative relations in \eqref{eq:lattice} is
generated by a collection of vectors, all of whose entries having
absolute value at most~$f(n,h):=(cn^7n!\log(h))^n$ for some absolute
constant~$c$.
\label{cor:lattice-height}
\end{corollary}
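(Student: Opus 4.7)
The plan is to apply Masser's bound (\cref{the:masser}) to the eigenvalues $\lambda_1,\ldots,\lambda_n$ of $g$. This requires two inputs: a bound $h'$ on the height of each $\lambda_i$ as an algebraic number, and a bound on $D:=[\QQ(\lambda_1,\ldots,\lambda_n):\QQ]$. The $\lambda_i$ are non-zero because $g\in\GL_n(\QQ)$ is invertible, so Masser's theorem does apply.

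The degree bound is immediate: the $\lambda_i$ are roots of the characteristic polynomial $p(x):=\det(xI-g)$, a polynomial of degree $n$ over $\QQ$, so $\QQ(\lambda_1,\ldots,\lambda_n)$ is contained in its splitting field, and hence $D\leq n!$. For the height bound, I first clear denominators. Let $\ell$ be the least common multiple of the denominators of the entries of~$g$, so $\ell\leq h^{n^2}$ and $|\ell g_{ij}|\leq h\ell \leq h^{n^2+1}$. Then $\ell^n p(x)$ lies in $\ZZ[x]$, and a Leibniz-expansion estimate on the elementary symmetric polynomials of the eigenvalues shows that its coefficients have absolute value at most $n!\,h^{n^3+n}$. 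Since the minimal polynomial of each $\lambda_i$ divides $\ell^n p(x)$ in $\ZZ[x]$, Mignotte's classical divisor bound yields $h' \leq 2^n n!\, h^{n^3+n}$, so in particular $\log h' = O(n^3\log h)$.

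It remains to substitute into Masser's inequality
\[
(cn\log h')^{n-1}\,D^{n-1}\,\frac{(\log(D+2))^{3n-3}}{(\log\log(D+2))^{3n-4}}
\]
and simplify. Using $\log h' = O(n^3 \log h)$, the first factor contributes $(c'\,n^4 \log h)^{n-1}$. Using $D\leq n!$ and $\log(n!+2)=O(n\log n)$, the third factor is bounded by $O\bigl(n^{3n-3}(\log n)^{3n-3}\bigr)$ after dropping the denominator (which is at least $1$). Multiplying these together with $D^{n-1}\leq (n!)^{n-1}$ gives
\[
  (c'')^n\,n^{7n-7}\,(\log h)^{n-1}\,(n!)^{n-1}\,(\log n)^{3n-3}.
\]

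The only real obstacle is the arithmetic bookkeeping needed to fit this into the target form $(c\,n^7 n! \log h)^n$. The nontrivial point is absorbing the parasitic $(\log n)^{3n-3}$ factor; this is legitimate because Stirling gives $(\log n)^{3n}\leq n!$ for all sufficiently large $n$, while finitely many small values of $n$ can be absorbed by enlarging the absolute constant $c$. All the remaining powers of $n$ then collapse into $n^{7n}$, yielding the claimed bound $(cn^7 n!\log h)^n$.
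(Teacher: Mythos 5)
Your proof is correct and follows essentially the same route as the paper: clear denominators to get an integer polynomial having the eigenvalues (up to a known scaling) as roots, bound its coefficient heights by a Leibniz expansion, bound $[\QQ(\lambda_1,\ldots,\lambda_n):\QQ]\leq n!$ via the splitting field, and then substitute into Masser's bound and absorb the parasitic $(\log n)$-factors into $n!$ via Stirling. The only cosmetic difference is that you make the passage from the height of the characteristic polynomial to the height of the individual $\lambda_i$ explicit via Mignotte's divisor bound, whereas the paper leaves that step implicit; both end up with the same asymptotics $\log h'=O(n^3\log h+n\log n)$, which is all that the final simplification needs.
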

\begin{proof}
Let $\ell \leq h^{n^2}$ be the least common multiple of the denominators of the entries of~$g$.  The eigenvalues of~\done{$g\ell$} are roots of the polynomial \[P(x):=\mathrm{det}(x\GId-g\ell)=\sum_{\sigma \in S_n} \mathrm{sgn}(\sigma) \prod_{i=1}^n (x\GId-g\ell)_{i,\sigma(i)} \, .\] 
The coefficients of this polynomial are integers of absolute value at most~$H:=2^nh^{n^3}n!$. Hence $\lambda_1,\ldots,\lambda_n$ have height at most~$H$.

Applying~\cref{the:masser}, since 
the field $\QQ(\lambda_1,\ldots,\lambda_n)$ has degree at most~$D:=n!$ over~$\QQ$, the lattice~$L$ of multiplicative relations in~\eqref{eq:lattice} is generated by vectors whose entries have absolute value at most $(cn^7n!\log(h))^n$ for some absolute constant~$c$.
\end{proof}

The next proposition gives a degree bound for the Zariski-closure of
the group generated \done{by} a single semisimple matrix.

\begin{restatable}{proposition}{proponematrix}
  \label{prop:gen_gs_height}
  Let $n\in \NN$ and $g \in \GL_n(\QQ)$ have entries of height at
  most~$h$. Then $\Zcl{\Gen{g_s}}$ is bounded by $n\cdot f(n,h)$ over
  $\QQ$, where $f(n,h):=(cn^7n!\log(h))^n$ for some absolute
  constant~$c$.
\end{restatable}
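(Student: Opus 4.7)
The strategy is to reduce to the case of a diagonal matrix acting on an algebraic torus, apply Masser's bound on the lattice of multiplicative relations between the eigenvalues, and then transport the degree bound back through a linear (degree-preserving) change of variables.

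First I would observe that since $\QQ$ is perfect, the Jordan--Chevalley decomposition expresses $g_s$ as a polynomial in $g$ with rational coefficients, so $g_s\in\GL_n(\QQ)$. Diagonalising over~$\QQbar$, write $g_s=PDP^{-1}$ where $D=\diag(\lambda_1,\dots,\lambda_n)$ and $\lambda_1,\dots,\lambda_n$ are the eigenvalues of~$g$. Conjugation by~$P$ is a linear change of coordinates on $\GL_n(\QQbar)$ which preserves degrees, so it suffices to bound the degree of $\Zcl{\Gen{D}}$ inside the diagonal torus $T\subseteq\GL_n(\QQbar)$.

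Next I would invoke the standard identification (as already used in \cref{ex:height} and in \cite[Lem.~6]{DerksenJK05}) of the Zariski closure of~$\Gen{D}$ in~$T$ with the subvariety cut out by the multiplicative relations between the eigenvalues, namely
\[\Zcl{\Gen{D}}=\{(x_1,\dots,x_n)\in T:\textstyle\prod_i x_i^{m_i}=1\text{ for all }(m_1,\dots,m_n)\in L\},\]
where $L\subseteq\ZZ^n$ is the lattice of multiplicative relations among $\lambda_1,\dots,\lambda_n$. By \cref{cor:lattice-height}, $L$ admits a generating set whose vectors have entries of absolute value at most~$f(n,h)$. A single generator $m=(m_1,\dots,m_n)\in L$ translates into the polynomial equation $\prod_{m_i>0} x_i^{m_i}-\prod_{m_i<0} x_i^{-m_i}=0$, whose total degree is $\max\bigl(\sum_{m_i>0} m_i,\sum_{m_i<0}(-m_i)\bigr)\le\sum_i|m_i|\le n\cdot f(n,h)$. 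Since finitely many such generators cut out~$\Zcl{\Gen{D}}$, we obtain the desired degree bound on the torus.

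Finally, I would pull these equations back through conjugation by~$P$ and embed into $\GL_n(\QQbar)$ using the low-degree equations for the diagonal torus coming from~\eqref{eq:gln}; this yields polynomials of total degree at most $n\cdot f(n,h)$ defining $\Zcl{\Gen{g_s}}$. A priori these polynomials have coefficients in~$\QQbar$; to conclude over~$\QQ$, I would use that $g_s\in\GL_n(\QQ)$ implies $\Zcl{\Gen{g_s}}$ is stable under $\Aut(\QQbar/\QQ)$, so \cref{lem:bound_k_group_is_k_bounded}\ref{itm:k_bounded} upgrades the bound to one over~$\QQ$ without any degree increase. The main obstacle I anticipate is the bookkeeping around the torus description: one must justify cleanly that it suffices to take one polynomial per \emph{basis} vector of~$L$, and that embedding the torus back into $\GL_n$ together with the conjugation by $P$ introduces no extra degree. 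Masser's quantitative input is supplied directly by \cref{cor:lattice-height}, so the remaining work is essentially verification of constants.
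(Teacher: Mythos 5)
Your proposal follows essentially the same route as the paper's proof: diagonalise $g_s$, identify $\Zcl{\Gen{D}}$ with the subvariety cut out by the lattice $L$ of multiplicative relations among the eigenvalues, bound the generators of $L$ by \cref{cor:lattice-height}, and transport the resulting degree bound back through conjugation. The only (welcome) difference is that you are a bit more explicit than the paper on two points — how a generator $(m_1,\dots,m_n)\in L$ is turned into a polynomial of total degree at most $n\cdot f(n,h)$, and why the final bound holds \emph{over~$\QQ$} despite the conjugation by an a priori irrational $P$, which you handle cleanly via $\Aut(\QQbar/\QQ)$-stability and \cref{lem:bound_k_group_is_k_bounded}\ref{itm:k_bounded}.
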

\begin{proof}
  Let $g_s$ be written in the form $g_s=h^{-1} a h$, where 
  $a=\mathrm{diag}(\lambda_1,\ldots,\lambda_n)$ is diagonal.
  Write 
  \[ L:=\{ (k_1,\ldots,k_n) \in \mathbb{Z}^n : \lambda_1^{k_1} \cdots \lambda_n^{k_n}=1\}\]
  for the group of multiplicative relations among
  $\lambda_1,\ldots,\lambda_n$.   It is known (see,
  e.g., \cite{DerksenJK05}) that
  \[ \Zcl{\Gen{a}}= \{ \mathrm{diag}(x_1,\ldots,x_n) :
  x_1^{k_1} \cdots x_n^{k_n}=1 \text{ for all } (k_1,\ldots,k_n) \in L\} \; . \]
  
  But by \cref{cor:lattice-height}, $L$ is generated as an additive
  subgroup of~$\ZZ^n$ by vectors $(k_1,\ldots,k_n) \in \ZZ^n$ where
  the~$k_i$ have absolute value at most~$f(n,h)$.  It follows
  that~$\Zcl{\Gen{a}}$ is bounded over $\QQ$ by~$nf(n,h)$.  But then
  we have that $\Zcl{\Gen{g_s}}=h^{-1}\Zcl{\Gen{a}}h$ is also bounded
  by~$nf(n,h)$ over~$\QQ$.
\end{proof}

\clonesemisimple*

\begin{proof}
  Recall that by \cref{claim:generators-h-height}, all matrices~$g\in
  S'$ have height bounded by~$h'$, where $h'$ is as given in the
  statement of \cref{claim:generators-h-height}. The proof of the
  claim then follows immediately from \cref{prop:gen_gs_height}.
\end{proof}

\clboundh*
    
\begin{proof}
  By \cref{lem:subgroup_generated_by_unipotents}, $U$ is bounded
  over~$\QQ$ by~$(n^3+1)^{2^{3n^2}}$, and by \cref{claim:gen_gs_bound},
  each group $\Zcl{\Gen{(g_i)_s}}$ is bounded over~$\QQ$ by~$n\cdot f(n,h')$
  where~$f$ is as in the statement
  of~\cref{cor:lattice-height}.
  
   Let $d:=\max\left(n\cdot f(n,h'),(n^3+1)^{2^{3n^2}}\right)$.     
  By~\cref{prop:elm=ideal}, $H$ is bounded over~$\QQ$
  by~$(d+1)^{2^{(\ell+2)(n^2+1)}}$.
\end{proof}

 \section{Analysis of the algorithm of Derksen, Jeandel, and Koiran}
 \label{app:derksen}
 \label{app-DJK}
In this section, we give a quick overview of the algorithm
of \citet*{DerksenJK05}.  We then explain why analysing its
complexity requires a delicate analysis that is unlikely to yield a
good upper bound.  We refer to~\citep{nosanM1} for a more in-depth
analysis.

\subsection{Overview}
In a nutshell, the algorithm builds the closure
$G:=\Zcl{\Gen{g_1,\ldots,g_N}}$ of a finite set
$\{g_1,\ldots,g_N\}\subseteq \GL_n(\QQ)$ by computing successive
under-approximations $H\cdot S$ where~$H$ is an irreducible group
and~$S$ a finite set, until $G=H\cdot S$.
Initially, $H:=\{\GId\}$ and $S:=\{\GId,g_1,\ldots,g_N\}$.  The
algorithm performs a main loop and at each iteration, one of the
following will happen:
\begin{enumerate}
    \item $S$ increases in a way that makes $H\cdot S$ larger, or
    \item $H$ increases, or
    \item the algorithm stops.
\end{enumerate}
We do not repeat the details of the algorithm nor its proof of
correctness.  It suffices to know that~$H$ is an irreducible normal subgroup of~$G$
at the beginning of every iteration.\footnote{This is technically not
true in the algorithm as written in~\cite{DerksenJK05}, but a minor
modification ensures it without changing the complexity~\cite{nosanM1}.}  This
implies that $G/H$ is an algebraic group and, therefore, only two
possibilities exist:
\begin{itemize}
\item If $G/H$ is finite then~$H$ will not grow anymore and the
  algorithm essentially ``fills up'' the quotient~$G/H$ by adding
  elements to~$S$, which takes finitely many steps.
\item If $G/H$ is infinite then the algorithm will eventually find an
  element~$h$ of infinite order in~$G/H$ and replace~$H$ by
  $\Zcl{H\cdot\IdComp{\Zcl{\Gen h}}}$ and then close~$H$ under various
  operations to ensure it is a still a normal subgroup of~$G$.
\end{itemize}
By virtue of being an irreducible variety, $H$ can only increase~$n^2$
times, by the dimension argument recalled in \cref{sub:chains}.
Hence, the delicate part of the analysis is to understand how long it
takes for the algorithm to either discover an element of infinite
order to make $H$ grow, or to ``fill up'' $G/H$ when it is finite.

\subsection{Case of a Finite Quotient}
Consider the latter situation: the quotient~$G/H$ is finite.  In this
case, the algorithm needs to discover all the elements of~$G/H$ by
taking products of elements in~$S$.  Even if we
disregard the fact that not all products will produce new elements, it
might still take~$|G/H|$ steps for the algorithm to finish.  Hence, we
need to obtain a bound on the size of~$G/H$.

Now, $G/H$ being an algebraic group, it can be embedded as a matrix
group in some dimension~$d$.  It can be shown that, if the generators
$g_1,\ldots,g_N$ have rational entries, then in fact~$G/H$ can be
embedded in $\GL_d(\QQ)$ using the quotient map~$\phi_H$.  It then
follows by \cref{theo-order-finite-G} that the size of~$G/H$ is
bounded by a function of~$d$ only ($(2d)!$ is an upper bound).  In
other words, we need to obtain a bound on~$d$.  Such a bound is shown
by \citet{Feng} and recalled in~\cref{th:homomorphism_quotient}, and
depends on the degree of the polynomials that define~$H$ (and this
bound cannot be improved in general).\footnote{There is not
necessarily a unique way of choosing polynomial equations for a given
variety, but one could consider the smallest degree among all such
equations.} In summary: we need to track the degree of (some)
polynomial equations that define~$H$.

\subsection{Case of an Infinite Quotient}
In order to track the degree of the equations defining~$H$, we need to
study the case where~$H$ grows.  Recall that this happens when the
algorithm finds an element $h$ of infinite order in $G/H$, and then
$H$ becomes $\Zcl{H\cdot\IdComp{\Zcl{\Gen h}}}$. This, in turn,
requires to bound the degree of equations defining $\IdComp{\Zcl{\Gen
h}}$.  It can be seen that $\IdComp{\Zcl{\Gen h}}$ essentially encodes
the multiplicative relations between the eigenvalues of~$h$.  The only
bound on the degree of those relations that we are aware of is given
by \citeauthor{Mas88}'s Theorem~\citep{Mas88}
(see \cref{th:masser}).\footnote{We note in passing that the
complexity of finding those relations is a nontrivial topic in
itself. The naive algorithm that enumerates all relations up to the
bound and checks them one by one has a terrible complexity.  A
potentially more efficient algorithm is given by \citet*{Ge} but it
does not improve the bound on the degree.}  The crucial issue,
however, is that the \emph{degree bound} given by \citeauthor{Mas88}'s
Theorem depends on the \emph{height of the coefficients}.  In summary:
we need to understand the height of elements~$h$ of infinite order in
$G/H$.

The algorithm needs to find an element~$h$ of infinite order in $G/H$
to make progress.  The only approach that we are aware of to perform
this task reduces the group~$G/H$ modulo some well-chosen prime~$p$,
and is used in the current state-of-the-art computational group theory
algorithms \citep[e.g.,][]{detinko13}.  Indeed, a well-known lemma
by \citet{minkowski87} shows that the kernel of the reduction map is
torsion-free and therefore that it suffices to find two elements
in~$G/H$ with the same reduction to obtain a matrix in the kernel that
will then have infinite order.  One can then use the fact that the
image of the group through the reduction map has size at
most~$p^{n^2}$ to argue that we only need to look at~$p^{n^2}$
products to find an element of infinite order.  This means that, in
order to bound the height of the element~$h$ of infinite order, we
need a bound on~$p$.

For this reduction to work, we need~$p$ not to divide any denominator
of the coefficients of the generators of~$G/H$.  We are led, again, to
the study of the quotient map in \cref{th:homomorphism_quotient} since
the generators of $G/H$ are $\phi_H(g_1),\ldots,\phi_H(g_N)$.
Unfortunately, the height of $\phi_H(g_i)$ depends on the height of
the coefficients of $\phi_H$, which is not analysed in
\cref{th:homomorphism_quotient}.  In fact, we are not aware of any
result in the literature that provides a bound on this height.  A good
foundation to obtain such a bound is described in~\cite{nosanM1} and
could, in principle, be obtained by a combination of a careful
analysis of the quotient, the use of quantifier elimination, and
bounds on the degree of bases of polynomial ideals.  Unfortunately, it
is clear that it would in turn depend in a nontrivial way on the
degree and height of the equations that define~$H$. Furthermore, as
conceived above, it would incur at least an exponential blow up.

\subsection{Conclusion}
To summarise, a complete analysis of the algorithm
of \citet*{DerksenJK05} seems to require to:
\begin{itemize}
\item analyse of the height of the quotient map
  in \cref{th:homomorphism_quotient}, and
\item track the degree and height of the equations that define~$H$.
\end{itemize}
Furthermore, we note again that both quantities (degree and height)
potentially suffer from \emph{at least} an exponential blow-up each
time~$H$ increases.  As~$H$ increases up to $n^2$ times, this means
that the obtained bound would be at least a tower of~$n^2$
exponentials, i.e., a non-elementary complexity bound.  In conclusion,
it is not clear what such a bound would be and the technical obstacles
to obtain it are significant.

\end{appendix}

\end{document}